\definecolor{light-gray}{gray}{0.965}
\DeclareMathAlphabet{\mathcalligra}{T1}{calligra}{m}{n}
\newcolumntype{C}[1]{>{\centering\let\newline\\\arraybackslash\hspace{0pt}}m{#1}}
 \newcolumntype{b}{X}
\newcolumntype{s}{>{\hsize=.5\hsize}X}
\newcolumntype{C}{>{\centering\arraybackslash}X}
\definecolor{darkgreen}{rgb}{0.1,0.1,1}
\numberwithin{equation}{section}
\normalsize\setlength{\parskip}{\baselineskip}
\newcommand*{\lcterm}[3]{%
\settoheight{\LC@temph}{\mbox{$\scriptstyle\overline{#3\,}$}}
\settowidth{\LC@tempw}{$\scriptstyle #2$}
{#1}{}_{#2:\lcroof{#3}}^{\makebox[\LC@tempw]{\hfill$\scriptstyle 1$\hfill}}
}
\newcommand*{\rom}[1]{\expandafter\@slowromancap\romannumeral #1@}
\newtheorem{theorem}{Theorem}
\newtheorem{corollary}{Corollary}
\newtheorem{example}{Example}
\newtheorem{lemma}[theorem]{Lemma}
\newtheorem{proposition}{Proposition}
\newtheorem{remark}{Remark}
\numberwithin{theorem}{section}
\numberwithin{remark}{section}
\numberwithin{corollary}{section}
\numberwithin{proposition}{section}
\numberwithin{example}{section}
\date{}
\renewcommand\AB@affilsepx{;\protect\Affilfont}
\title{Risk aggregation and capital allocation using a new generalized Archimedean copula}
\author[a]{Fouad Marri \thanks{Corresponding Author: fmarri@insea.ac.ma}}
\author[b]{Khouzeima Moutanabbir}
 \affil[a]{Department of Statistics and Actuarial Science, Institut National de Statistique et d'Economie Appliqu\'ee, INSEA, Morocco}\affil[b]{Department of Finance and Investment Management, School of Management, University of Johannesburg, Johannesburg, South Africa.} 
\begin{document}
\bibliographystyle{myapalikecopie}
\maketitle
\begin{abstract}
In this paper, we address risk aggregation and capital allocation problems in the presence of dependence between risks. The dependence structure is defined by a mixed Bernstein copula which represents a generalization of the well-known Archimedean copulas. Using this new copula, the probability density function and the cumulative distribution function of the aggregate risk are obtained. Then, closed-form expressions for basic risk measures, such as tail value-at-risk (TVaR) and TVaR-based allocations, are derived. \newline

$\emph{Keywords}$ : \emph{Bernstein copulas}; \emph{Capital allocation}; \emph{Copulas}; \emph{Dependence}; \emph{Tail value at risk}; \emph{Value-at-Risk} 
\end{abstract}
\section{Introduction\label{sectA}}
Risk aggregation and risk-based capital allocation have attracted considerable attention in actuarial sciences and quantitative risk management over the past years. One of the crucial applications of risk aggregation is to determine the required regulatory capitals and to price insurance and reinsurance products. For this purpose, an adequate risk measure should be used to evaluate the whole level of risk for a given portfolio. An important risk measure is the Value at Risk (VaR), which is defined as a threshold value such that the probability the loss on a portfolio exceeds this value is a given probability $1-\kappa$. Another interesting risk measure is the Tail Value at Risk (TVaR), also known as the conditional tail expectation (CTE), which represents the average amount of a loss given that the loss exceeds a specified quantile. The TVaR is known to be a coherent risk measure over the space of continuous random variables (see \citet{dhaene2008can}; \citet{furman2008weighted}; \citet{furman_landsman_2006}).

In this paper, we address risk aggregation and capital allocation using the TVaR risk measure. Consider a vector of $n$ continuous and non-negative random variables (rv's) $\mathbf{X}=\left(X_{1},\cdots,X_{n}\right)$, the component $X_i$ denotes the marginal risk (claim or loss). We define the aggregate loss as $S_n=X_1+\cdots+X_n$. For a given confidence level $\kappa$, the value at risk of $S_n$, $VaR_\kappa(S_n)$, is defined by
\begin{equation*}
VaR_\kappa(S_n)=inf \left (x \in \mathbb{R}, F_{S_n}(x)\geq \kappa\right),
\end{equation*}
where $0 <\kappa< 1$ and $F_{S_n}$ is the cumulative distribution function (cdf) of $S_{n}$. We also define the tail value at risk of $S_n$, $TVaR_\kappa(S_n)$, as follows
\begin{equation}
TVaR_\kappa(S_n)=E\left (S_n\mid S_n> VaR_\kappa(S_n)\right). \label{fuaredsnb}
\end{equation}
The contribution of the $i$-th risk $X_i$ to the aggregate risk $S_n$ is given by
\begin{eqnarray}
TVaR_\kappa(X_i;S_n)&=&E\left (X_i \mid S_n> VaR_\kappa(S_n)\right), \label{fuaredsnbr}
\end{eqnarray}
for $i=1,\cdots,n.$ The additivity of the expectation allows the decomposition of the TVaR into the sum of TVaR contributions as follows
\begin{equation*}
TVaR_\kappa(S_n)=\sum\limits_{i=1}^{n} TVaR_\kappa(X_i;S_n).
\end{equation*}
Adequate economic capital and capital allocation relay on accurate dependences modeling between different components of the portfolio. For this reason, multivariate risk models incorporating dependence are very important in risk modeling in finance and insurance. In the literature, several classes of multivariate distributions have been proposed. \citet{landsman2003tail} obtained the explicit formulas of $TVaR_\kappa(S_n)$ and $TVaR_\kappa(X_i;S_n)$ for the multivariate elliptical distributions, which include the distributions such as multivariate normal, stable, student, etc. Other closed-form expressions for the economic capital and risk contribution under multivariate phase-type
distributed risks have been given in \citet{cai2005conditional}. The case of multivariate gamma
distribution for risks has been studied in \citet{furman2005risk} as well as a multivariate Tweedie distribution in \citet{furman2010multivariate}. \citet{chirag2007} consider the case of multivariate Pareto risks. In these papers, explicit expressions for the $TVaR_\kappa(S_n)$ and the TVaR-based allocation are derived. For further details on the TVaR-based
allocation of risk capital, \citet{kim2007estimation} and references within can be consulted. Other researchers addressed the risk aggregation problem using copulas. For example, \citet{cossette2013multivariate} consider a portfolio of dependent risks whose multivariate distribution is the Farlie-Gumbel-Morgenstern copula with mixed Erlang marginal distributions. \citet{sarabia2016risk} give explicit formulas for the probability density function of $S_n$ for some multivariate mixed exponential distributions for which the dependence structure is an Archimedean copula. Recently, \citet{marri2018moments} derive explicit expressions for the higher moments of the discounted aggregate renewal claims with dependence. 

The focus of this paper is to derive explicit formulas of $TVaR_\kappa(S_n)$ and $TVaR_\kappa(X_i;S_n)$ for $i =1,\cdots,n$, with a more general dependence structure that allows us to capture different types of dependence between structure. The multivariate model that we suggest is a mixed Bernstein copula. The mixed Bernstein copulas have many attractive properties and, in particular, they are non-exchangeable. This is useful for risk aggregation in many insurance and financial applications. Our model provides a new generalization of the well-known Archimedean copulas. The remainder of this paper is structured as follows. In Section \eqref{sectB}, we introduce the mixed Bernstein copulas. The distribution of the aggregated risk is investigated in Section \eqref{sectC}. In Section \eqref{sectDD}, we derive closed-formulas of the $TVaR_\kappa(S_n)$ and $TVaR_\kappa(X_i;S_n)$ for $n$ dependent rv's $X_1, X_2,\cdots,X_n$ joined by the mixed Bernstein copulas. We obtain specific expressions for the
aggregated distribution in Section \eqref{sectD}. The results are illustrated with numerical applications in Section \eqref{sectEE}.
\section{Mixed Bernstein copulas\label{sectB}}
In this section, we will construct a new family of copulas that will be used in the different models.
 
Let $\mathbf{X}=\left(X_{1},\cdots,X_{n}\right)$ be a vector of $n$ continuous and positive random variables (rv's) with joint survival distribution function (sf) denoted by $\bar{H}$ and univariate survival marginal distributions $\bar{H}_i,\quad i=1,\cdots,n$. Let $\Theta$ be a positive random variable (rv) with probability density function (pdf) $f_{\Theta}$, cumulative distribution function (cdf) $F_{\Theta}$, and Laplace transform $f_{\Theta}^{\star}(s)=\int_{0}^{\infty} e^{-s\theta }f_{\Theta} (\theta)d\theta$. In this paper, we assume that $X_{1},\cdots,X_{n}$ are $n$ dependent, positive and continuous rv's such that 
 \begin{eqnarray}
{ \left(X_{1},\cdots,X_{n}\right)^{\top}=\left(\frac{Z_{1}}{\Theta},\cdots,\frac{Z_{n}}{\Theta}\right)^{\top}}, \label{eqP2derfmlo}
\end{eqnarray}
where $\left(Z_{1},\cdots,Z_{n}\right)$ is a vector of $n$ continuous rv's with joint survival distribution function denoted by $\bar{F}$ and with standard exponential marginal distributions (with mean $1$).

 In \citet{sarabia2018aggregation} and \citet{Albrecher11}, the rv's $Z_i$ are supposed to be independent. While such an assumption significantly simplifies the model setup, it is also to possible that it leads to a misidentification of the dependence structure. Indeed, the variable $\Theta$ only capture the common factor of dependence between all the n variables (e.g., climate conditions, age,$\cdots$,etc.). In this paper, we add another level of dependence by assuming the vector $Z_i$ has dependent components. According to Sklar's theorem for survival functions (see e.g. \citet{Sklar}), the multivariate survival function of ${ Z_{1}},\cdots,{ Z_{n}}$ can be written as a function of the marginal survival functions and a copula $C_1$ describing the dependence structure as follows:
$\bar{F}(z_1,\cdots,z_n)={C_1}\left(e^{-z_1},\cdots,e^{-z_n}\right),$ for $n \in \left\{2,3,\cdots \right\},\,$ and $z_1,\cdots,z_n \geq0.$

In this paper, we assume that $C_1$ is defined with the Bernstein copulas $C_{B} $ introduced in \citet{sancetta2004bernstein} and defined under some specifics conditions on the function $\alpha$, by
 \begin{eqnarray}
C_{B}(u_1,\cdots,u_n)
&=&\displaystyle\sum\limits_{\nu_1=0}^{m_1}\cdots\displaystyle\sum\limits_{\nu_n=0}^{m_n}\alpha\left(\frac{\nu_1}{m_1},\cdots,\frac{\nu_n}{m_n}\right)\prod\limits_{i=1}^{n}G_{\nu_i:m_i}(u_i),\label{eqPremiumrate1}
\end{eqnarray}
for every $(u_1,\cdots,u_n)$ in $\left[ 0,1\right]^{n}$ such that $ 0\leq \nu_i\leq m_i\in \mathbb{N}$, where
 \begin{eqnarray}
G_{\nu_i:m_i}(u_i)&=&\binom{m_i}{\nu_i} u_i^{\nu_i}(1-u_i)^{m_i-\nu_i},\quad \nu_i=0,1,\cdots,m_i, \label{eqPremiumrate1er}
\end{eqnarray}
is the $\nu_i$th Bernstein polynomial of order $m_i,\quad i=1,\cdots,n$. \citet{sancetta2004bernstein} showed that the coefficients of the Bernstein copulas $C_{B}$ have a direct interpretation as
the points of some arbitrary approximated copula $C_1$, i.e., $C_1\left(\frac{\nu_1}{m_1},\cdots,\frac{\nu_n}{m_n}\right)=\alpha\left(\frac{\nu_1}{m_1},\cdots,\frac{\nu_n}{m_n}\right).$ This justifies our choice of Bernstein copulas since our mixed Bernstein model could approximate many different dependence structures. In \citet{cottin2014bernstein}, it has been proved that any copula function can be approximated uniformly using Bernstein polynomials. In the following theorem, we give sufficient conditions for $C_{B}(u_1,\cdots,u_n)$ to be a copula.
\begin{theorem}\label{ritazcmlopdes} 
 The Bernstein polynomial $C_{B}(u_1,\cdots,u_n)$ is a copula function if conditions
 \begin{eqnarray}
\displaystyle\sum\limits_{l_1=0}^{1}\cdots \displaystyle\sum\limits_{l_n=0}^{1}(-1)^{n+l_1+\cdots+l_n} \alpha\left(\frac{\nu_1+l_1}{m_1},\cdots,\frac{\nu_n+l_n}{m_n}\right)&\geq &0,
\end{eqnarray}
for all $ 0\leq \nu_i\leq m_i-1,\quad i=1,\cdots,n,$ \label{ritazcmlop346sq9f} 
 \begin{eqnarray}
\alpha\left(\frac{\nu_1}{m_1},\cdots,\frac{\nu_{i-1}}{m_{i-1}},0,\frac{\nu_{i+1}}{m_{i+1}}\cdots,\frac{\nu_n}{m_n}\right)=0,\quad \forall i=1,\cdots,n, \label{ritazcmlop3469sz} 
\end{eqnarray}
and
 \begin{eqnarray}
\alpha\left(1,\cdots,1,\frac{\nu_{i}}{m_{i}},1,\cdots,1\right)=\frac{\nu_{i}}{m_{i}},\quad \forall i=1,\cdots,n \label{ritazcmlop346sq9fgr} 
\end{eqnarray}
hold. Moreover, \eqref{ritazcmlop3469sz} and \eqref{ritazcmlop346sq9fgr} are necessary for $C_{B}(u_1,\cdots,u_n)$ to be a copula.
\end{theorem}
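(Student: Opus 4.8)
The plan is to check that $C_{B}$ satisfies the three defining properties of an $n$-copula: (i) it is grounded, i.e. $C_{B}(u_1,\dots,u_n)=0$ whenever some $u_i=0$; (ii) it has uniform one-dimensional margins, i.e. $C_{B}(1,\dots,1,u_i,1,\dots,1)=u_i$; and (iii) it is $n$-increasing (nonnegative box volumes). Throughout I would use three elementary facts about the univariate Bernstein polynomials $G_{\nu:m}$: that $\{G_{\nu:m}\}_{\nu=0}^{m}$ is a basis of the polynomials of degree $\le m$ (hence linearly independent, and so are their tensor products); the differentiation identity $G_{\nu:m}'(u)=m\bigl(G_{\nu-1:m-1}(u)-G_{\nu:m-1}(u)\bigr)$ with the convention $G_{-1:m-1}\equiv G_{m:m-1}\equiv 0$; and the fact that the Bernstein operator reproduces linear polynomials, so $\sum_{\nu=0}^{m}\tfrac{\nu}{m}G_{\nu:m}(u)=u$.

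For (i), setting $u_i=0$ in \eqref{eqPremiumrate1} and using $G_{\nu_i:m_i}(0)=\mathbf{1}_{\{\nu_i=0\}}$ collapses the multiple sum to the terms with $\nu_i=0$; by \eqref{ritazcmlop3469sz} each surviving coefficient $\alpha(\dots,0,\dots)$ vanishes, so $C_{B}=0$. For (ii), setting $u_j=1$ for $j\neq i$ and using $G_{\nu_j:m_j}(1)=\mathbf{1}_{\{\nu_j=m_j\}}$ collapses the sum to the terms with $\nu_j=m_j$ for all $j\neq i$, leaving $\sum_{\nu_i=0}^{m_i}\alpha(1,\dots,1,\tfrac{\nu_i}{m_i},1,\dots,1)\,G_{\nu_i:m_i}(u_i)$, which by \eqref{ritazcmlop346sq9fgr} equals $\sum_{\nu_i=0}^{m_i}\tfrac{\nu_i}{m_i}G_{\nu_i:m_i}(u_i)=u_i$. (A grounded, $n$-increasing function with uniform margins automatically takes values in $[0,1]$, so no separate range check is needed.)

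The substantive step is (iii). Rather than handle box volumes directly, I would compute the density $c_{B}:=\partial^{n}C_{B}/\partial u_1\cdots\partial u_n$; since $C_{B}$ is a polynomial, $n$-increasingness is equivalent to $c_{B}\ge 0$ on $[0,1]^n$. Differentiating \eqref{eqPremiumrate1} once in each variable via the identity above produces the factor $\prod_{i=1}^{n}m_i\bigl(G_{\nu_i-1:m_i-1}(u_i)-G_{\nu_i:m_i-1}(u_i)\bigr)$ inside the sum; expanding this product and re-indexing each sum by the shift $\nu_i\mapsto j_i$ (using the boundary conventions to drop vanishing terms) gives
\[
c_{B}(u_1,\dots,u_n)=\Bigl(\prod_{i=1}^{n}m_i\Bigr)\sum_{j_1=0}^{m_1-1}\cdots\sum_{j_n=0}^{m_n-1}\Delta(j_1,\dots,j_n)\prod_{i=1}^{n}G_{j_i:m_i-1}(u_i),
\]
where $\Delta(j_1,\dots,j_n)=\sum_{l_1=0}^{1}\cdots\sum_{l_n=0}^{1}(-1)^{n+l_1+\cdots+l_n}\alpha\bigl(\tfrac{j_1+l_1}{m_1},\dots,\tfrac{j_n+l_n}{m_n}\bigr)$ is exactly the quantity assumed nonnegative in \eqref{ritazcmlop346sq9f}. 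Since $G_{j_i:m_i-1}\ge 0$ and $m_i>0$, this forces $c_{B}\ge 0$, proving (iii). I expect the sign and index bookkeeping in this expansion to be the only delicate point; it is cleanest to first verify the one-dimensional case, where $\tfrac{d}{du}C_{B}(u)=m\sum_{j=0}^{m-1}\bigl(\alpha(\tfrac{j+1}{m})-\alpha(\tfrac{j}{m})\bigr)G_{j:m-1}(u)$, and then take tensor products.

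Finally, for the necessity of \eqref{ritazcmlop3469sz} and \eqref{ritazcmlop346sq9fgr}: if $C_{B}$ is a copula, groundedness forces the polynomial $\sum\alpha(\dots,0,\dots)\prod_{j\neq i}G_{\nu_j:m_j}(u_j)$ obtained in step (i) to be identically zero, and linear independence of the tensor-product Bernstein basis makes every coefficient vanish, which is \eqref{ritazcmlop3469sz}; similarly the margin condition forces $\sum_{\nu_i}\alpha(1,\dots,\tfrac{\nu_i}{m_i},\dots,1)G_{\nu_i:m_i}(u_i)=u_i=\sum_{\nu_i}\tfrac{\nu_i}{m_i}G_{\nu_i:m_i}(u_i)$, and matching coefficients gives \eqref{ritazcmlop346sq9fgr}.
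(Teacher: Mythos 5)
Your argument is correct, but it is not the paper's route: the paper does not prove Theorem \ref{ritazcmlopdes} at all, it simply cites \citet{yang2020bernstein}, whereas you give a self-contained verification of the copula axioms. Your three steps are sound: groundedness follows from $G_{\nu_i:m_i}(0)=\mathbf{1}_{\{\nu_i=0\}}$ together with \eqref{ritazcmlop3469sz}; the uniform-margin property follows from $G_{\nu_j:m_j}(1)=\mathbf{1}_{\{\nu_j=m_j\}}$, \eqref{ritazcmlop346sq9fgr} and the reproduction of linear functions by the Bernstein operator; and the $n$-increasingness follows because the mixed partial derivative, after the summation-by-parts re-indexing you describe (using $G_{\nu:m}'=m(G_{\nu-1:m-1}-G_{\nu:m-1})$ with the boundary conventions), has coefficients exactly equal to the $n$-fold finite differences of $\alpha$ assumed nonnegative in \eqref{ritazcmlop346sq9f} -- indeed your expression for $c_B$ coincides with the density formula \eqref{eqPremiumrate2} that the paper quotes from the literature, which is a useful internal consistency check. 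The necessity part via linear independence of the (tensor-product) Bernstein basis is also correct, and you rightly confine it to \eqref{ritazcmlop3469sz} and \eqref{ritazcmlop346sq9fgr}, since \eqref{ritazcmlop346sq9f} is not claimed to be necessary (nonnegativity of a polynomial does not force nonnegative Bernstein coefficients). What your approach buys is a complete, elementary proof within the paper; what the citation buys is brevity and deference to the reference where the same density-based argument is carried out.
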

\begin{proof}A proof of this theorem was given by \citet{yang2020bernstein}.\end{proof} 
 The application of Bernstein copulas in actuarial science is recent. \citet{salmon2006pricing} and \citet{hurd2007using} apply Bernstein copulas to the pricing of two-asset derivatives written on foreign exchange rates. \citet{diers2012dependence} use Bernstein copulas to model the dependence between non-life insurance risks. \citet{tavin2013application} analyze properties of Bernstein copulas in a context of multi-asset derivatives pricing. In contrast to Archimedean copulas, Bernstein copulas can model non-exchangeable dependence structures.
 
 From \citet{sancetta2004bernstein} and \citet{cottin2014bernstein}, the corresponding pdf of the copula $C_{B}$ is given by
\begin{eqnarray}
c_{B}(u_1,\cdots,u_n)
&=&\displaystyle\sum\limits_{\nu_1=0}^{m_1-1}\cdots \displaystyle\sum\limits_{\nu_n=0}^{m_n-1}
\gamma\left(\frac{\nu_1}{m_1},\cdots,\frac{\nu_n}{m_n}\right)\left(\prod\limits_{i=1}^{n}m_iG_{\nu_i:m_i-1}(u_i)\right),\label{eqPremiumrate2}
\end{eqnarray}
where
\begin{eqnarray}
\gamma\left(\frac{\nu_1}{m_1},\cdots,\frac{\nu_n}{m_n}\right)&=&\displaystyle\sum\limits_{l_1=0}^{1}\cdots \displaystyle\sum\limits_{l_n=0}^{1}(-1)^{n+l_1+\cdots+l_n} \alpha\left(\frac{\nu_1+l_1}{m_1},\cdots,\frac{\nu_n+l_n}{m_n}\right).\nonumber
\end{eqnarray}
\citet{cottin2014bernstein} show that the Bernstein copulas density function can also be expressed as
\begin{eqnarray}
c_{B}(u_1,\cdots,u_n)&=&\displaystyle\sum\limits_{\nu_1=0}^{m_1-1}\cdots \displaystyle\sum\limits_{\nu_n=0}^{m_n-1}\Pr\left(N_1=\nu_1,\cdots,N_n=\nu_n\right)\left(\prod\limits_{i=1}^{n}m_iG_{\nu_i:m_i-1}(u_i)\right),\label{eqPremideqmfre}
\end{eqnarray}
where $\left(N_1,\cdots,N_n\right)$ is a random vector whose marginal component $N_i$ follows a discrete uniform distribution over $ \left\{0,1,\cdots,m-1 \right\}$ and $\Pr\left(N_1=\nu_1,\cdots,N_n=\nu_n\right)=\gamma\left(\frac{\nu_1}{m_1},\cdots,\frac{\nu_n}{m_n}\right).$ For more details about Bernstein copulas, we refer
readers to \citet{kulpa1999approximation} and \citet{sancetta2004bernstein}.

From \eqref{eqP2derfmlo}, the joint survival function of $X_{1},\cdots,X_{n}$ can be written as
\begin{eqnarray}
\bar{H}\left(x_1,\cdots,x_n\right)&=&\int_{0}^{\infty}C_{B}\left(e^{-\theta x_1},\cdots,e^{-\theta x_n} \right) f_{\Theta}(\theta)d\theta, \label{eqP}
\end{eqnarray}
for $x_1,\cdots,x_n \geq0$. It implies that the marginal survival function of ${ X_{i}}$ is given by 
\begin{eqnarray}
\bar{H}_i(x)=\Pr({X_i}\geq x )&=&f_{\Theta}^{\star} (x),\label{fuarezdef}
\end{eqnarray}
for $i=1,\cdots,n,$ where $f_{\Theta }^{\star}$ is the Laplace transforms of $F_{\Theta }$. Note that  the marginal random variables $X_i$ are necessarily completely monotone (see, e.g.,\citet{oakes1989bivariate}).

A closed-form expression for the survival function of $(X_{1},\cdots,X_{n})$ is given in the next theorem.
\begin{theorem}\label{ritazcmlop}Let $\mathbf{X}=\left(X_{1},\cdots,X_{n}\right)$ be a vector of $n$ continuous and non-negative rv's defined by the stochastic representation \eqref{eqP2derfmlo}. Then the survival function of $(X_{1},\cdots,X_{n})$ is given by
\begin{eqnarray}
\bar{H}\left(x_1,\cdots,x_n\right)&=&\displaystyle\sum\limits_{\ell_1=0}^{m_1}\cdots\displaystyle\sum\limits_{\ell_n=0}^{m_n} \beta_{\ell_1,\cdots,\ell_n} f_{\Theta}^{\star}\left(\sum\limits_{i=1}^{n} \ell_i x_i\right), \label{ritazcmlop3469} 
\end{eqnarray}
for $x_1,\cdots,x_n \geq0$, where
$\beta_{\ell_1,\cdots,\ell_n}=\displaystyle\sum\limits_{\nu_1=0}^{\ell_1}\cdots\displaystyle\sum\limits_{\nu_n=0}^{\ell_n}(-1)^{\sum\limits_{i=1}^{n}(\ell_i-\nu_i)}\left[\prod\limits_{i=1}^{n}\binom{m_i-\nu_i}{m_i-\ell_i} \right]\left[\prod\limits_{i=1}^{n}\binom{m_i}{\nu_i} \right]\alpha\left(\frac{\nu_1}{m_1},\cdots,\frac{\nu_n}{m_n}\right)
.$
\end{theorem}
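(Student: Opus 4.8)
The plan is to start from the integral representation \eqref{eqP} of $\bar H$ and expand the Bernstein copula $C_{B}$ evaluated at $(e^{-\theta x_1},\dots,e^{-\theta x_n})$ into a finite linear combination of pure exponentials in $\theta$; once this is done, integrating against $f_{\Theta}$ turns each exponential into a value of the Laplace transform $f_{\Theta}^{\star}$, which is exactly the form of the claimed identity \eqref{ritazcmlop3469}.

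First I would substitute the defining formula \eqref{eqPremiumrate1} for $C_{B}$ into \eqref{eqP}, so that the integrand becomes $\sum_{\nu_1=0}^{m_1}\cdots\sum_{\nu_n=0}^{m_n}\alpha(\nu_1/m_1,\dots,\nu_n/m_n)\prod_{i=1}^{n}G_{\nu_i:m_i}(e^{-\theta x_i})$. The key computation is to linearise each factor, recalling \eqref{eqPremiumrate1er}: since $G_{\nu_i:m_i}(e^{-\theta x_i})=\binom{m_i}{\nu_i}e^{-\nu_i\theta x_i}(1-e^{-\theta x_i})^{m_i-\nu_i}$, applying the binomial theorem to $(1-e^{-\theta x_i})^{m_i-\nu_i}$ and setting $\ell_i=\nu_i+k_i$ (so that $\ell_i$ runs from $\nu_i$ to $m_i$) gives $G_{\nu_i:m_i}(e^{-\theta x_i})=\sum_{\ell_i=\nu_i}^{m_i}\binom{m_i}{\nu_i}\binom{m_i-\nu_i}{m_i-\ell_i}(-1)^{\ell_i-\nu_i}e^{-\ell_i\theta x_i}$, where I use $\binom{m_i-\nu_i}{\ell_i-\nu_i}=\binom{m_i-\nu_i}{m_i-\ell_i}$. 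Taking the product over $i=1,\dots,n$ converts this product of one-dimensional sums into a single multi-index sum over $(\ell_1,\dots,\ell_n)$ with exponential factor $e^{-\theta\sum_{i}\ell_i x_i}$.

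Next I would interchange the order of the $\nu$-sums and the $\ell$-sums: because $\nu_i\le\ell_i\le m_i$, letting $\ell_i$ run from $0$ to $m_i$ on the outside and $\nu_i$ from $0$ to $\ell_i$ on the inside recovers the same index set. Collecting the coefficient of $e^{-\theta\sum_i\ell_i x_i}$ then produces precisely $\beta_{\ell_1,\dots,\ell_n}$ as defined in the statement. Finally, since the resulting sum over $(\ell_1,\dots,\ell_n)$ is finite, I can exchange it with the integral $\int_0^\infty(\cdot)f_{\Theta}(\theta)\,d\theta$ with no convergence issue, and each term $\int_0^\infty e^{-\theta\sum_i\ell_i x_i}f_{\Theta}(\theta)\,d\theta$ equals $f_{\Theta}^{\star}(\sum_i\ell_i x_i)$ by definition of the Laplace transform, yielding \eqref{ritazcmlop3469}.

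The argument is essentially bookkeeping, so there is no serious analytic obstacle; the step requiring the most care is the double reindexing — first the substitution $\ell_i=\nu_i+k_i$ inside each binomial expansion, then the swap moving the $\ell$-sums outside the $\nu$-sums — while keeping track of the sign $(-1)^{\sum_i(\ell_i-\nu_i)}$ and of the two products of binomial coefficients so that they assemble correctly into $\beta_{\ell_1,\dots,\ell_n}$.
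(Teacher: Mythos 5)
Your proposal is correct and follows essentially the same route as the paper's proof: substitute the Bernstein copula into the mixture representation, expand $(1-e^{-\theta x_i})^{m_i-\nu_i}$ by the binomial theorem, reindex $\ell_i=\nu_i+k_i$, swap the $\nu$- and $\ell$-sums, and identify the Laplace transform $f_{\Theta}^{\star}$. The only (immaterial) difference is that the paper recognizes $f_{\Theta}^{\star}$ immediately after the binomial expansion and then swaps sums, whereas you swap sums before integrating; your explicit note that $\binom{m_i-\nu_i}{\ell_i-\nu_i}=\binom{m_i-\nu_i}{m_i-\ell_i}$ is a detail the paper uses silently.
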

\begin{proof}From \eqref{eqPremiumrate1} and \eqref{eqP}, we have 
\begin{eqnarray}
\bar{H}\left(x_1,\cdots,x_n\right)&=&\displaystyle\sum\limits_{\nu_1=0}^{m_1}\cdots\displaystyle\sum\limits_{\nu_n=0}^{m_n}\alpha\left(\frac{\nu_1}{m_1},\cdots,\frac{\nu_n}{m_n}\right)\int_{0}^{\infty}\prod\limits_{i=1}^{n}G_{\nu_i:m_i}(e^{-\theta x_i}) f_{\Theta}(\theta)d\theta.\label{rezromiofdmp}
\end{eqnarray}
Otherwise, applying the binomial theorem to $(1-e^{-\theta x_i})^{m_i-\nu_i}$ and using \eqref{eqPremiumrate1er} yield
 \begin{eqnarray*}
 \prod\limits_{i=1}^{n}G_{\nu_i:m_i}\left(e^{-\theta x_i}\right)&=&\left[\prod\limits_{i=1}^{n}\binom{m_i}{\nu_i} \right]\displaystyle\sum\limits_{\ell_1=0}^{m_1-\nu_1}\cdots\displaystyle\sum\limits_{\ell_n=0}^{m_n-\nu_n}\left[\prod\limits_{i=1}^{n}\binom{m_i-\nu_i}{\ell_i} \right] e^{-\theta \sum\limits_{i=1}^{n} x_i\left({\nu_i}+\ell_i\right)}\left(-1\right)^{\displaystyle\sum\limits_{i=1}^{n}\ell_i}.
\end{eqnarray*}
Substituting the last expression into \eqref{rezromiofdmp}, we obtain
\begin{eqnarray*}
&&\bar{H}\left(x_1,\cdots,x_n\right)\nonumber\\
&=&\displaystyle\sum\limits_{\nu_1=0}^{m_1}\cdots\displaystyle\sum\limits_{\nu_n=0}^{m_n}\alpha\left(\frac{\nu_1}{m_1},\cdots,\frac{\nu_n}{m_n}\right)\left[\prod\limits_{i=1}^{n}\binom{m_i}{\nu_i} \right] \displaystyle\sum\limits_{\ell_1=0}^{m_1-\nu_1}\cdots\displaystyle\sum\limits_{\ell_n=0}^{m_n-\nu_n}\left[\prod\limits_{i=1}^{n}\binom{m_i-\nu_i}{\ell_i} \right](-1)^{\sum\limits_{i=1}^{n}\ell_i} f_{\Theta}^{\star}\left(\sum\limits_{i=1}^{n} \left(\nu_i+\ell_i\right)x_i\right)\nonumber\\
&=&\displaystyle\sum\limits_{\nu_1=0}^{m_1}\cdots\displaystyle\sum\limits_{\nu_n=0}^{m_n}\alpha\left(\frac{\nu_1}{m_1},\cdots,\frac{\nu_n}{m_n}\right)\left[\prod\limits_{i=1}^{n}\binom{m_i}{\nu_i} \right] \displaystyle\sum\limits_{\ell_1=\nu_1}^{m_1}\cdots\displaystyle\sum\limits_{\ell_n=\nu_n}^{m_n}\left[\prod\limits_{i=1}^{n}\binom{m_i-\nu_i}{\ell_i-\nu_i} \right](-1)^{\sum\limits_{i=1}^{n}(\ell_i-\nu_i)} f_{\Theta}^{\star}\left(\sum\limits_{i=1}^{n} \ell_i x_i\right).\label{eqPfdfrfg}
\end{eqnarray*}
Inverting the order of summation with respect to the variables $\ell_i$ and $\nu_i,$ $i=1,\cdots,n$, one readily obtains 
\begin{eqnarray*}
\bar{H}\left(x_1,\cdots,x_n\right)&=&\displaystyle\sum\limits_{\ell_1=0}^{m_1}\cdots\displaystyle\sum\limits_{\ell_n=0}^{m_n} \beta_{\ell_1,\cdots,\ell_n} f_{\Theta}^{\star}\left(\sum\limits_{i=1}^{n} \ell_i x_i\right),
\end{eqnarray*}
from which we get the desired result.
\end{proof} 
 On the other hand, according to Sklar's theorem for survival functions, see e.g. \citet{Sklar}, the joint survival function of $X_{1},\cdots, X_{n}$ can be written as a function of the marginal survival functions $\bar{H}_i,\, i=1,\cdots,n,$ and a copula $C$ describing the dependence structure as follows:
\begin{eqnarray}
\bar{H}\left(x_1,\cdots,x_n\right)&=&{C}\left(\bar{H}_1(x_1),\cdots,\bar{H}_n(x_n)\right),\label{treyfre}
\end{eqnarray}
for $n \in \left\{2,3,\cdots \right\},\,$ and $x_1,\cdots,x_n \geq0.$ Then, the following proposition holds.
\begin{proposition}\label{ritazcmlop34} The copula function $C : [0, 1]^n \to [0, 1]$ that corresponds to the general dependence structure defined in Theorem \eqref{ritazcmlop} is given, for $u_i \in [0, 1], i = 1,\cdots,n,$ by
\begin{eqnarray}
C\left( u_1,\cdots,u_n \right)&=&\displaystyle\sum\limits_{\ell_1=0}^{m_1}\cdots\displaystyle\sum\limits_{\ell_n=0}^{m_n}\beta_{\ell_1,\cdots,\ell_n} f_{\Theta}^{\star}\left(\sum\limits_{i=1}^{n} \ell_i f_{\Theta}^{\star -1}(u_i) \right), \label{ritazcmlop346h} 
\end{eqnarray}
where
$\beta_{\ell_1,\cdots,\ell_n}=\displaystyle\sum\limits_{\nu_1=0}^{\ell_1}\cdots\displaystyle\sum\limits_{\nu_n=0}^{\ell_n}(-1)^{\sum\limits_{i=1}^{n}(\ell_i-\nu_i)}\left[\prod\limits_{i=1}^{n}\binom{m_i-\nu_i}{m_i-\ell_i} \right]\left[\prod\limits_{i=1}^{n}\binom{m_i}{\nu_i} \right]\alpha\left(\frac{\nu_1}{m_1},\cdots,\frac{\nu_n}{m_n}\right).$
\end{proposition}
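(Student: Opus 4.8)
The plan is to invoke Sklar's theorem for survival functions together with the explicit marginal survival functions identified in \eqref{fuarezdef} and the closed-form joint survival function already obtained in Theorem \eqref{ritazcmlop}. The whole argument is then just a change of variables $x_i \mapsto u_i = \bar H_i(x_i)$.

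First I would record the analytic properties of $f_{\Theta}^{\star}$ that make the substitution legitimate. Since $f_{\Theta}^{\star}$ is the Laplace transform of a genuine probability density $f_{\Theta}$ supported on $(0,\infty)$, the map $s \mapsto f_{\Theta}^{\star}(s)$ is continuous and strictly decreasing on $[0,\infty)$, with $f_{\Theta}^{\star}(0)=1$ and $\lim_{s\to\infty} f_{\Theta}^{\star}(s)=0$. Hence it is a bijection from $[0,\infty)$ onto $(0,1]$, its inverse $f_{\Theta}^{\star-1}\colon (0,1]\to[0,\infty)$ is well defined, continuous and strictly decreasing, and in particular each $X_i$ is a continuous random variable, so the copula $C$ in \eqref{treyfre} is uniquely determined.

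Next, fix $u_i\in(0,1]$ for $i=1,\ldots,n$ and put $x_i=f_{\Theta}^{\star-1}(u_i)\ge 0$, so that $\bar H_i(x_i)=u_i$ by \eqref{fuarezdef}. Evaluating the Sklar representation \eqref{treyfre} at this point gives $C(u_1,\ldots,u_n)=\bar H(x_1,\ldots,x_n)$, and inserting the closed form \eqref{ritazcmlop3469} of Theorem \eqref{ritazcmlop} yields
\begin{equation*}
C\left( u_1,\cdots,u_n \right)=\displaystyle\sum\limits_{\ell_1=0}^{m_1}\cdots\displaystyle\sum\limits_{\ell_n=0}^{m_n}\beta_{\ell_1,\cdots,\ell_n}\, f_{\Theta}^{\star}\!\left(\sum\limits_{i=1}^{n} \ell_i f_{\Theta}^{\star -1}(u_i) \right),
\end{equation*}
with the coefficients $\beta_{\ell_1,\cdots,\ell_n}$ inherited verbatim from Theorem \eqref{ritazcmlop}; no recomputation of these combinatorial weights is needed. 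The identity on the boundary faces $u_i=0$ then follows by continuity, noting that $f_{\Theta}^{\star-1}(u_i)\to\infty$ as $u_i\to 0^+$, which sends every term carrying $\ell_i\ge 1$ to zero and so reproduces the correct (degenerate) limit, extending the formula to all of $[0,1]^n$.

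The only delicate point is this preliminary step: establishing that $f_{\Theta}^{\star}$ is invertible with range $(0,1]$ and controlling its behaviour at the endpoints $0$ and $1$. Once that is settled, the proposition is a one-line substitution into Theorem \eqref{ritazcmlop}, and I anticipate no further obstacle.
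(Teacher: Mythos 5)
Your argument is correct and is essentially the paper's own proof: the authors likewise obtain the result by combining the marginal identity \eqref{fuarezdef}, the closed-form survival function \eqref{ritazcmlop3469}, and Sklar's representation \eqref{treyfre}, i.e.\ the substitution $x_i=f_{\Theta}^{\star-1}(u_i)$. Your additional remarks on the strict monotonicity and range of $f_{\Theta}^{\star}$ and on the boundary faces $u_i=0$ simply make explicit details the paper leaves implicit.
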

\begin{proof}The result follows easily from \eqref{fuarezdef}, \eqref{ritazcmlop3469} and \eqref{treyfre}. 
\end{proof}
This new family of copulas $C$ extends the well-known Archimedean copulas and could be seen as a mixture of distorted Archimedean copulas with generator $f_{\Theta}$. The study of the properties of this copula are beyond the scoop of this paper and we leave it for a future research.
\begin{remark} From \eqref{ritazcmlop3469sz}, one gets $\beta_{\ell_1,\cdots,\ell_{i-1},0,\ell_{i+1},\cdots,\ell_n}=0,$ for $i=1,\cdots,n.$
\end{remark}
\begin{remark} If $m_1=\cdots=m_n=1$, then the copula $C$ in \eqref{ritazcmlop346h} reduces to a n-dimensional Archimedean copula with generator $f_{\Theta}$ and given by $C\left( u_1,\cdots,u_n \right)= f_{\Theta}^{\star}\left(\sum\limits_{i=1}^{n} f_{\Theta}^{\star -1}(u_i) \right)$.
\end{remark}
\begin{remark} The most simple form of the bivariate Bernstein copula $C_B$ is given by the Farlie-Gumbel-Morgenstern copula, which is defined by
$C_{B}^{FGM}(u,v)=uv+\delta uv(1-u)(1-v),\,$ for $\delta\in[-1,1]$ with $\delta=4\alpha(\frac{1}{2},\frac{1}{2})-1$ and $m_1=m_2=n=2$. The expression for $C\left( u_1,u_2 \right) $ in \eqref{ritazcmlop346h} turns into
\begin{eqnarray}
C\left( u_1,u_2 \right)&=&\displaystyle\sum\limits_{\ell_1=1}^{2}\displaystyle\sum\limits_{\ell_2=1}^{2}\beta_{\ell_1,\ell_2} f_{\Theta}^{\star}\left(\sum\limits_{i=1}^{2} \ell_i f_{\Theta}^{\star -1}(u_i) \right), \label{ritazcmlop34fr6h} 
\end{eqnarray}
where
$\beta_{1,1}=1+\delta,\,$ $\beta_{1,2}=\beta_{2,1}=-\delta\,$ and $\beta_{2,2}=\delta.$ Then $C$ in \eqref{ritazcmlop34fr6h} reduces to the copula discussed in \citet{cote2019dependence}.
\end{remark}
For the sake of simplicity, it is assumed that $m_i = m,$ for $i = 1,\cdots, n$.

An appropriate choice of the joint cumulative distribution $\alpha$ in the calculation of the coefficient $\beta$ could lead to a different dependence structure. This point is illustrated in the following example. In fact, the copulas defined in Proposition \eqref{ritazcmlop34} could significantly change the obtained dependence by an Archimedean copula.
\begin{example}
 Assume that the mixing rv $\Theta$ is following a Gamma distribution such that
 \begin{equation*}
 f_{\Theta}^{\star} (x)=\left(1+\frac{x}{b}\right)^{-a},\quad a \geq 1,
 \end{equation*}
 Thus, when $m=1$ the copula $C$ in \eqref{ritazcmlop346h} is reduced to a Clayton copula. For different values of $m$ and for different choices of $\alpha$, the dependence structure is assessed via the Spearman's $\rho$ that is given by
 \begin{equation}\label{spearman}
 \rho=12\int_{0}^{1} \int_{0}^{1} C(u_{1},u_{2})du_{1} du_{2}-3.
 \end{equation}
 This allows us to measure the impact of introducing the Bernstein copula on the dependence structure of the mixed exponential model. In this illustrations, we consider two cases for $alpha$
 \begin{description}
 \item[(i)] Counter-comonotonic: $\alpha(u_{1},u_{2})=max(u_{1}+u_{2}-1,0)$, and
 \item[(ii)] Comonotonic: $\alpha(u_{1},u_{2})=min(u_{1},u_{2})$.
 \end{description}
 Given this choices for $\alpha$, the obtained values of $\rho$ will consist of an upper bound if $\alpha$ is comonotonic and a lower limit if $\alpha$ is counter-comonotonic. In Figure \ref{Fig1}, the values of $\rho$ are displayer for $m=1,2,\cdots,15$ and for $a=1,5$, and 10.
 \begin{figure}[h]
 \centering
 \includegraphics[width=1\linewidth]{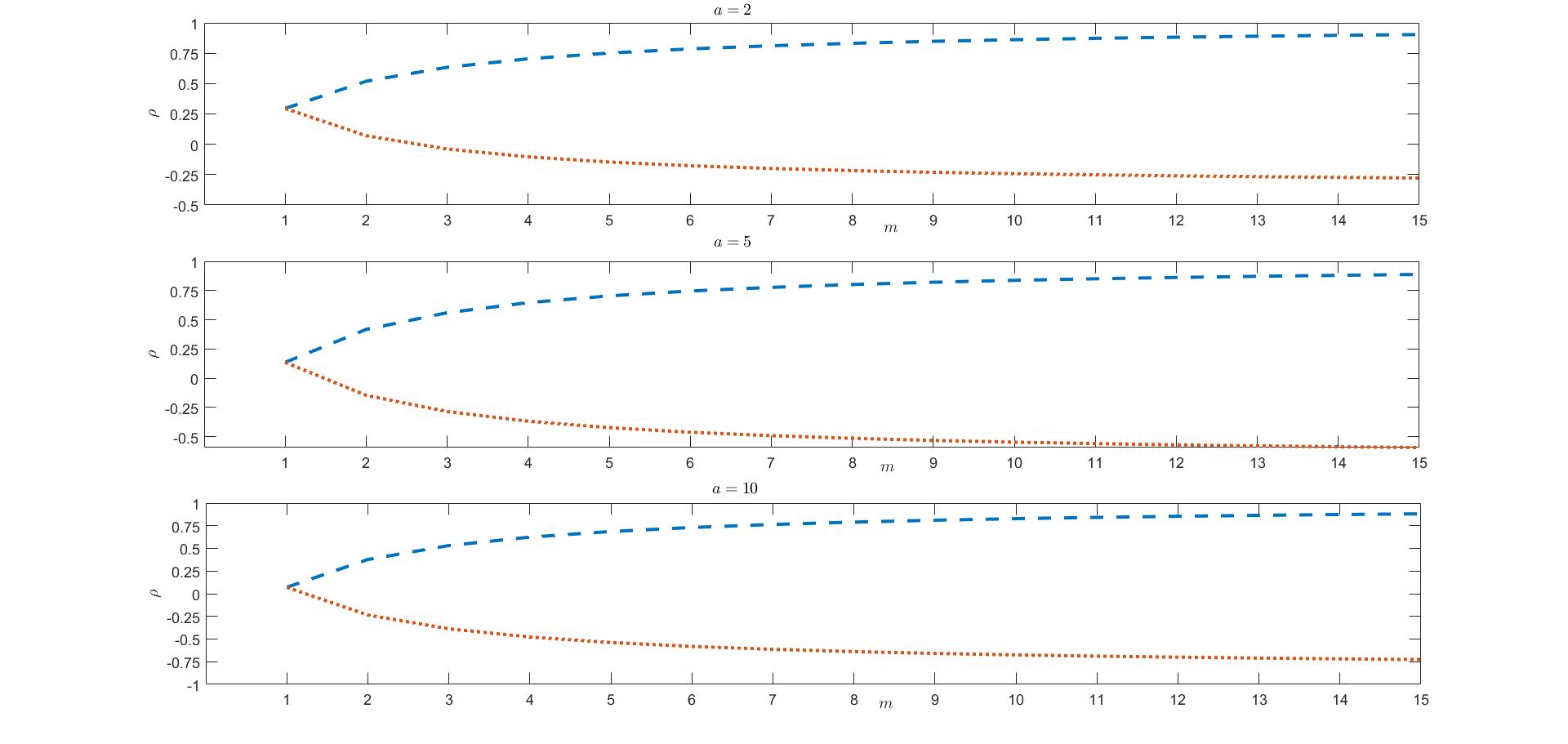}
	 \caption[]{\label{Fig1} The upper and lower bounds for $\rho$ for different values for $m$. }
\end{figure}
It is obvious that the $\rho$ decreases (increases) as $m$ increases when the Bernstein copula has a negative (positive) dependence. Moreover, the dependence is significantly changing, which means that the introduction of the mixed-Bernstein copula improves the obtained range of dependence and allows $\rho$ to reach values beyond the value in the case of Archimedean copula (when $m=1$). Similar results with the same pattern are obtained under different choices of the copula $\alpha$.
\end{example}
\section{The distribution of the aggregated risk ${S}_n=\sum\limits_{i=1}^{n}X_i$\label{sectC}}
In this section, we obtain the probability density function and the survival
function of the aggregated risk $S_n$.

The joint probability density function (pdf) of $\left({ Z_{1}},\cdots,{ Z_{n}}\right)$ is given by 
 \begin{eqnarray}
{f}_{{ Z_{1}},\cdots,{ Z_{n}}}\left(z_1,\cdots,z_n\right)&=&{c_B}\left(e^{-z_1},\cdots,e^{-z_n}\right)e^{-\sum\limits_{i=1}^{n} z_i}.\label{treyez}
\end{eqnarray}
Combining  \eqref{eqPremiumrate2} and \eqref{treyez}, the pdf of $\left({ Z_{1}},\cdots,{ Z_{n}}\right)$ becomes
\begin{eqnarray}
f_{Z_{1},Z_{2},\cdots,Z_{n}}(z_1,\cdots,z_n)
&=&m^n e^{-\sum\limits_{i=1}^{n} z_i}\displaystyle\sum\limits_{\nu_1=0}^{m-1}\cdots \displaystyle\sum\limits_{\nu_n=0}^{m-1}
\gamma\left(\frac{\nu_1}{m},\cdots,\frac{\nu_n}{m}\right)\prod\limits_{i=1}^{n}\left[\binom{m-1}{\nu_i} e^{-\nu_i z_i}(1-e^{-z_i})^{m-1-\nu_i}\right]\nonumber\\
&=&\displaystyle\sum\limits_{\nu_1=0}^{m-1}\cdots \displaystyle\sum\limits_{\nu_n=0}^{m-1}
\gamma\left(\frac{\nu_1}{m},\cdots,\frac{\nu_n}{m}\right)\prod\limits_{i=1}^{n} f_{W_i}(z_i), \label{23s}
\end{eqnarray}
where $W_i$ are $n$ independent rv's with pdf $f_{W_i}(z)=m\binom{m-1}{\nu_i} e^{-(\nu_i+1) z}(1-e^{-z})^{m-1-\nu_i}.$ Note that $f_{W_i}$ is the density function of the $m-\nu_i$th of the smallest order statistic from $m$ i.i.d. rv's $D_i,$ $i=1,\cdots,m,$ exponentially distributed with mean 1 $\left(W_i \buildrel d \over ={D_{m-\nu_i:m}} \right)$. For more details about order statistics, we refer readers to \citet{david2004order}. Furthermore, it is known that (see, e.g., \citet{basu19981}) 
\begin{eqnarray}
D_{m-\nu_i:m} \buildrel d \over =\sum\limits_{j=1}^{m-\nu_i}\frac{Q_{j}}{m-j+1},\label{23sf}
\end{eqnarray}
 where $Q_j$ are $m-\nu_i$ independent exponential distributions with mean 1. (Here, $\buildrel d \over =$ stands for the 'equality in distribution). 
 
 To derive the pdf of the aggregated random variable $S_n=X_1+\cdots+X_n$ with the mixed Bernstein copulas presented in Section \eqref{sectB}, the following result is needed.
 \begin{theorem}\label{ritazcmlfr} Assume that the rv's $Z_{1},\cdots,Z_{n}$ are dependent and defined with the Bernstein copulas and distributed as exponential $Exp(1)$, such as assumed in Section \eqref{sectB}. Then, the pdf of the aggregated random variable $\tilde{S}_n=Z_1+\cdots+Z_n$ is given by
\begin{eqnarray}
f_{\tilde{S}_n}(x)&=&\sum\limits_{l=n}^{\infty}A_l e^{-m x} x^{l-1}\frac{m^{l}}{\Gamma(l)}, \label{rezrionde}
\end{eqnarray}
where $A_l=\Pr\left({\sum\limits_{i=1}^{n}\sum\limits_{j=N_i+1}^{m}\Delta_{i,j}}=l\right)$ and $\Delta_{i,j}$ form a sequence of independent rv's, independent of $N_i,$ for $i=1,\cdots,n$ and $j=1,\cdots,m$
such that $\Delta_{i,j}$ follow a shifted geometric distribution $Geom(\frac{j}{m}),$  with probability mass function $\Pr\left(\Delta_{i,j}=l\right)=\frac{j}{m}\left(1-\frac{j}{m}\right)^{l-1},$ for for $i=1,\cdots,n,\,$ $j=1,\cdots,m$ and $l=1,2,\cdots.$
\end{theorem}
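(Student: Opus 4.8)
The plan is to condition on the random vector $(N_1,\dots,N_n)$ with joint law $\Pr(N_1=\nu_1,\dots,N_n=\nu_n)=\gamma(\nu_1/m,\dots,\nu_n/m)$ — which is a genuine probability mass function, since the $\gamma$'s are nonnegative by Theorem \eqref{ritazcmlopdes} and sum to $1$ as each $f_{W_i}$ integrates to $1$ — and to represent $\tilde S_n$, conditionally on this vector, as a geometrically compounded sum of i.i.d.\ $\mathrm{Exp}(m)$ variables; the unconditional density then drops out by mixing Gamma densities. \textbf{Step 1.} Formula \eqref{23s} exhibits the joint density of $(Z_1,\dots,Z_n)$ as the mixture $\sum_{\nu_1,\dots,\nu_n}\Pr(N_1=\nu_1,\dots,N_n=\nu_n)\prod_{i=1}^n f_{W_i}(z_i)$, so conditionally on $\{N_i=\nu_i,\ i=1,\dots,n\}$ the $Z_i$ are independent with $Z_i\stackrel{d}{=}D_{m-\nu_i:m}$, whence $\tilde S_n\stackrel{d}{=}\sum_{i=1}^n D_{m-\nu_i:m}$.

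\textbf{Step 2 (R\'enyi representation).} Using \eqref{23sf} and reindexing the inner sum by the rate via $j\mapsto m-j+1$, one gets $D_{m-\nu_i:m}\stackrel{d}{=}\sum_{j=\nu_i+1}^{m}V_{i,j}$ with the $V_{i,j}$ independent and $V_{i,j}\sim\mathrm{Exp}(j)$. \textbf{Step 3 (geometric compounding).} I would then prove, by comparing Laplace transforms, that for every integer $j$ with $1\le j\le m$ one has $\mathrm{Exp}(j)\stackrel{d}{=}\sum_{k=1}^{\Delta}E_k$, where $\Delta\sim\mathrm{Geom}(j/m)$ with $\Pr(\Delta=l)=(j/m)(1-j/m)^{l-1}$, the $E_k$ are i.i.d.\ $\mathrm{Exp}(m)$, and $\Delta$ is independent of $(E_k)_{k\ge1}$; indeed the Laplace transform of the compound sum equals $\sum_{l\ge1}(j/m)(1-j/m)^{l-1}\bigl(m/(m+s)\bigr)^{l}=j/(j+s)$. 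Applying this to each $V_{i,j}$ of Step 2 — legitimate precisely because every rate occurring is $\le m$ — shows that, conditionally on $\{N_i=\nu_i\}$, $\tilde S_n$ is a sum of $\sum_{i=1}^n\sum_{j=\nu_i+1}^m\Delta_{i,j}$ i.i.d.\ $\mathrm{Exp}(m)$ variables, the $\Delta_{i,j}$ being mutually independent and independent of the $E$'s.

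\textbf{Step 4 (mixing and the lower index).} Conditionally on the total count being $l$, a sum of $l$ i.i.d.\ $\mathrm{Exp}(m)$ variables has density $e^{-mx}x^{l-1}m^{l}/\Gamma(l)$. Removing the conditioning on the counts, and then on $(N_1,\dots,N_n)$ — using that the $\Delta_{i,j}$ are independent of the $N_i$ — gives $f_{\tilde S_n}(x)=\sum_{l}\Pr(M=l)\,e^{-mx}x^{l-1}m^{l}/\Gamma(l)$, where $M=\sum_{i=1}^n\sum_{j=N_i+1}^m\Delta_{i,j}$; this is exactly \eqref{rezrionde} with $A_l=\Pr(M=l)$. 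Since $\nu_i\le m-1$, the value $j=m$ always appears in each inner sum and $\Delta_{i,m}\ge1$, hence $M\ge n$ almost surely and $A_l=0$ for $l<n$, which is why the series may be started at $l=n$.

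The only genuinely delicate point is the compounding identity of Step 3 together with the constraint that all exponential rates are $\le m$ — precisely what the reindexing in Step 2 makes visible — plus the mild care needed to justify interchanging the (absolutely convergent, nonnegative) mixture sums with the identification of the Gamma density and its series expansion. Everything else is bookkeeping: keeping track of the conditional independence that the product form of \eqref{23s} supplies.
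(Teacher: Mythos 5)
Your proof is correct and follows essentially the same route as the paper: the heart of both arguments is the identity $\frac{j}{j+s}=\sum_{l\ge 1}\frac{j}{m}\bigl(1-\frac{j}{m}\bigr)^{l-1}\bigl(\frac{m}{m+s}\bigr)^{l}$, i.e.\ the representation of an $\mathrm{Exp}(j)$ variable ($j\le m$) as a $\mathrm{Geom}(j/m)$-compound of i.i.d.\ $\mathrm{Exp}(m)$ variables, combined with the mixture form \eqref{23s}, the R\'enyi decomposition \eqref{23sf}, and the observation that the total count is at least $n$ so the series starts at $l=n$. The only difference is presentational: you condition on $(N_1,\dots,N_n)$ and the geometric counts and identify the Erlang densities directly, while the paper carries the whole computation at the level of the Laplace transform of $\tilde S_n$ and inverts the resulting series term by term.
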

\begin{proof}From \eqref{23s} and \eqref{23sf}, the Laplace transform of $\tilde{S}_n$ is 
\begin{eqnarray*}
f_{\tilde{S}_n}^{\star}(s)&=&\displaystyle\sum\limits_{\nu_1=0}^{m-1}\cdots \displaystyle\sum\limits_{\nu_n=0}^{m-1}
\gamma\left(\frac{\nu_1}{m},\cdots,\frac{\nu_n}{m}\right)\prod\limits_{i=1}^{n}\prod\limits_{j=1}^{m-\nu_i}\frac{m-j+1}{m-j+1+s}\nonumber\\
&=&\displaystyle\sum\limits_{\nu_1=0}^{m-1}\cdots \displaystyle\sum\limits_{\nu_n=0}^{m-1}
\gamma\left(\frac{\nu_1}{m},\cdots,\frac{\nu_n}{m}\right)\prod\limits_{i=1}^{n}\prod\limits_{j=\nu_i+1}^{m}\frac{j}{j+s}.
\end{eqnarray*}
In fact, we can rewrite the above result as
\begin{eqnarray*}
f_{\tilde{S}_n}^{\star}(s)&=&\displaystyle\sum\limits_{\nu_1=0}^{m-1}\cdots \displaystyle\sum\limits_{\nu_n=0}^{m-1}
\gamma\left(\frac{\nu_1}{m},\cdots,\frac{\nu_n}{m}\right)\prod\limits_{i=1}^{n}\prod\limits_{j=1}^{m}\left(\frac{j}{j+s}\right)^{\mathbbm{1}_{ \left\{j\geq \nu_i+1 \right\}}},
\end{eqnarray*}
where $\mathbbm{1}_{ \left\{\mathcal{A} \right\}}$ denoting the indicator function on ${\mathcal{A}}$, we readily obtain that
\begin{eqnarray*}
f_{\tilde{S}_n}^{\star}(s)&=&\sum\limits_{\nu_1=0}^{m-1}\cdots\sum\limits_{\nu_n=0}^{m-1}\gamma\left(\frac{\nu_1}{m_1},\cdots,\frac{\nu_n}{m_n}\right)
\prod\limits_{j=1}^{m}\left(\frac{j}{j+s}\right)^{\sum\limits_{i=1}^{n}\mathbbm{1}_{ \left\{j\geq \nu_i+1 \right\}}}.
\end{eqnarray*}
On the other hand, we have $\prod\limits_{j=1}^{m}\left(\frac{j}{j+s}\right)^{\sum\limits_{i=1}^{n}\mathbbm{1}_{ \left\{j\geq \nu_i+1 \right\}}}=\left(\frac{m}{m+s}\right)^{mn-\sum\limits_{i=1}^{n}\nu_i}\prod\limits_{j=1}^{m}\left(\frac{\frac{j}{m}}{1-\left(1-\frac{j}{m}\right)\left(\frac{m}{m+s}\right)}\right)^{\sum\limits_{i=1}^{n}\mathbbm{1}_{ \left\{j\geq \nu_i+1 \right\}}}.$\\
$f_{\tilde{S}_n}^{\star}$ can then be rewritten as 
\begin{eqnarray}
f_{\tilde{S}_n}^{\star}(s)&=&E\left[\left(\frac{m}{m+s}\right)^{\sum\limits_{i=1}^{n}\sum\limits_{j=N_i+1}^{m}\Delta_{i,j}}\right]\nonumber\\
&=&\sum\limits_{l=0}^{\infty}\left(\frac{m}{m+s}\right)^{l}\Pr\left({\sum\limits_{i=1}^{n}\sum\limits_{j=N_i+1}^{m}\Delta_{i,j}}=l\right), \label{rezrosw}
\end{eqnarray}
where $\Delta_{i,j},$ form a sequence of independent rv's, independent of $N_i,$ for $i=1,\cdots,n$ and $j=1,\cdots,m$
such that $\Delta_{i,j}$ follow a shifted geometric distribution $Geom(\frac{j}{m}),$ for $i=1,\cdots,n$ and $j=1,\cdots,m$. The inversion of \eqref{rezrosw} with respect to $s$ yields \eqref{rezrionde} with 
\begin{eqnarray}
A_l&=&\sum\limits_{\nu_1=0}^{m-1}\cdots\sum\limits_{\nu_n=0}^{m-1}\gamma\left(\frac{\nu_1}{m_1},\cdots,\frac{\nu_n}{m_n}\right)\Pr\left(\sum\limits_{i=1}^{n}\sum\limits_{j=\nu_i+1}^{m}\Delta_{i,j}=l\right).\qquad\label{bemol}
\end{eqnarray}
Since the support of the rv ${\sum\limits_{i=1}^{n}\sum\limits_{j=N_i+1}^{m}\Delta_{i,j}}$ is $\{n,n+1,\cdots\},$ then $A_l=0$ for $l=0,\cdots,n-1.$ The expression in \eqref{rezrionde} follows immediately.
\end{proof} 
Now, we are in position to derive a closed-form expression of the pdf of $S_n=\sum\limits_{i=1}^{n}X_i.$ 

In the following corollary, we use Theorem \eqref{ritazcmlfr} to derive the pdf of the aggregated random variable $S_n=X_1+\cdots+X_n.$
\begin{corollary}\label{ritazcml} Let $\mathbf{X}=\left(X_{1},\cdots,X_{n}\right)$ be a vector of $n$ continuous and non-negative rv's defined by the stochastic representation \eqref{eqP2derfmlo}. Then, the pdf of the aggregated random variable $S_n$ is given by
\begin{eqnarray}
f_{S_n}(x)&=&\sum\limits_{l=n}^{\infty}A_l(-1)^l \frac{m^{l}}{\Gamma(l)}x^{l-1}{f_{\Theta}^{\star}}^{(l)}\left(m x\right), \label{rezro}
\end{eqnarray}
for $x>0.$ 
\end{corollary}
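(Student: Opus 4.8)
The plan is to condition on the mixing variable $\Theta$, use the scaling identity $S_n=\tilde{S}_n/\Theta$, and then recognise the resulting $\Theta$-integral as a derivative of the Laplace transform $f_{\Theta}^{\star}$.

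\emph{Step 1 (conditioning).} From the stochastic representation \eqref{eqP2derfmlo} we have $S_n=\sum_{i=1}^{n}X_i=\Theta^{-1}\sum_{i=1}^{n}Z_i=\tilde{S}_n/\Theta$, where $\tilde{S}_n=Z_1+\cdots+Z_n$ is independent of $\Theta$. Hence, given $\{\Theta=\theta\}$, $S_n$ is a rescaling of $\tilde{S}_n$ by the factor $1/\theta$, so its conditional density is $f_{S_n\mid\Theta}(x\mid\theta)=\theta\,f_{\tilde{S}_n}(\theta x)$ for $x>0$. Substituting the series \eqref{rezrionde} of Theorem \eqref{ritazcmlfr} gives
\[
f_{S_n\mid\Theta}(x\mid\theta)=\sum_{l=n}^{\infty}A_l\,\frac{m^{l}}{\Gamma(l)}\,x^{l-1}\,\theta^{l}e^{-m\theta x},
\]
which term-by-term is a mixture of Erlang$(l,m\theta)$ densities with weights $A_l$, making the scaling step transparent.

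\emph{Step 2 (unconditioning and identifying the integral).} Writing $f_{S_n}(x)=\int_{0}^{\infty}f_{S_n\mid\Theta}(x\mid\theta)f_{\Theta}(\theta)\,d\theta$ and interchanging the sum and the integral, the $l$-th contribution is $A_l\,\frac{m^{l}}{\Gamma(l)}\,x^{l-1}\int_{0}^{\infty}\theta^{l}e^{-m\theta x}f_{\Theta}(\theta)\,d\theta$. Differentiating $f_{\Theta}^{\star}(s)=\int_{0}^{\infty}e^{-s\theta}f_{\Theta}(\theta)\,d\theta$ a total of $l$ times under the integral sign yields ${f_{\Theta}^{\star}}^{(l)}(s)=(-1)^{l}\int_{0}^{\infty}\theta^{l}e^{-s\theta}f_{\Theta}(\theta)\,d\theta$, so with $s=mx$ one obtains $\int_{0}^{\infty}\theta^{l}e^{-m\theta x}f_{\Theta}(\theta)\,d\theta=(-1)^{l}{f_{\Theta}^{\star}}^{(l)}(mx)$. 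Plugging this back reproduces exactly \eqref{rezro}.

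\emph{Main obstacle.} The only nonroutine points are the two interchanges of limiting operations. Because $A_l=\Pr(\cdots)\geq 0$, every summand in the Step-1 series is nonnegative, so Tonelli's theorem applies directly to $\int_{0}^{\infty}\sum_{l\geq n}(\cdot)\,f_{\Theta}(\theta)\,d\theta$; the sign factor $(-1)^{l}$ produced in Step 2 is then merely an algebraic rewriting of an already convergent series of nonnegative terms and requires no further justification. Differentiation under the integral sign defining $f_{\Theta}^{\star}$ is standard for $s$ in a neighbourhood of $mx>0$, since $\theta^{l}e^{-s\theta}f_{\Theta}(\theta)$ is dominated there by an integrable function. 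Thus the argument is essentially a bookkeeping exercise once the conditioning identity $S_n=\tilde{S}_n/\Theta$ is in place.
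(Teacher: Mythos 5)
Your proposal is correct and follows essentially the same route as the paper: write $f_{S_n}(x)=\int_{0}^{\infty}\theta f_{\tilde{S}_n}(\theta x)f_{\Theta}(\theta)\,d\theta$ via the representation \eqref{eqP2derfmlo}, substitute the series \eqref{rezrionde}, and identify $\int_{0}^{\infty}\theta^{l}e^{-m\theta x}f_{\Theta}(\theta)\,d\theta=(-1)^{l}{f_{\Theta}^{\star}}^{(l)}(mx)$. Your version merely spells out the justification of the sum--integral interchange and the differentiation under the integral sign, which the paper leaves implicit.
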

\begin{proof}From \eqref{eqP2derfmlo}, we have $f_{S_n}(x)=\int_{0}^{\infty}\theta f_{\tilde{S}_n}(x\theta)f_{\Theta}(\theta)d\theta.$ Substitution of \eqref{rezrionde} into the last expression yields the required result.
\end{proof} 
 \begin{remark} If $m=1$, It follows from \eqref{bemol} that 
\begin{eqnarray*}
A_l&=& \left\{
 \begin{array}{ll}
1& \mbox{if } l=n, \\
0 & \mbox{if } l\neq n.\label{rezrolkj12lk}
 \end{array}
\right.
\end{eqnarray*}
Substituting the last expression into \eqref{rezro}, one obtains the pdf of the aggregated risk $S_n$ discussed in \citet{sarabia2018aggregation} and given by
\begin{eqnarray*}
f_{S_n}(x)&=& \frac{(-1)^n}{\Gamma(n)}x^{n-1}{f_{\Theta}^{\star}}^{(n)}( x). 
\end{eqnarray*}
 \end{remark}
 We further derive the survival function of the distribution of $S_n=\sum\limits_{i=1}^{n}X_i.$
\begin{corollary}The survival function of the distribution of $S_n$ is given by
\begin{eqnarray*}
\Pr\left(S_n>x\right)&=&\sum\limits_{i=0}^{\infty}(-1)^{i} B_i \frac{(mx)^i}{i!}{f_{\Theta}^{\star}}^{(i)}\left( mx\right),
\end{eqnarray*}
for $x>0,$ where $B_i =\sum\limits_{\ell=\max(i+1,n)}^{\infty}A_{\ell}=\Pr\left({\sum\limits_{i=1}^{n}\sum\limits_{j=N_i+1}^{m}\Delta_{i,j}}\geq \max(i+1,n)\right).$
\end{corollary}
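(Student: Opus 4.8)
The plan is to obtain the survival function of $S_n$ by conditioning on the mixing variable $\Theta$, using that $\tilde S_n = Z_1 + \cdots + Z_n$ is independent of $\Theta$ and $S_n = \tilde S_n/\Theta$; equivalently (and as a cross-check) one can simply integrate the density $f_{S_n}$ from Corollary \eqref{ritazcml} over $(x,\infty)$. I would carry the conditioning argument out in three steps.

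First I would rewrite the density of $\tilde S_n$ from Theorem \eqref{ritazcmlfr}, $f_{\tilde S_n}(y) = \sum_{l=n}^{\infty} A_l\, e^{-my} y^{l-1} m^{l}/\Gamma(l)$, as a mixture of Erlang$(l,m)$ densities with weights $A_l$, which are probabilities (Theorem \eqref{ritazcmlfr}), hence nonnegative and summing to one. Integrating term by term (legitimate by Tonelli) and using the standard Erlang tail expansion gives
\begin{equation*}
\Pr(\tilde S_n > y) = \sum_{l=n}^{\infty} A_l \sum_{i=0}^{l-1} e^{-my}\frac{(my)^{i}}{i!}, \qquad y > 0.
\end{equation*}
Second, conditioning on $\Theta$,
\begin{equation*}
\Pr(S_n > x) = \int_{0}^{\infty} \Pr(\tilde S_n > x\theta)\, f_{\Theta}(\theta)\, d\theta = \sum_{l=n}^{\infty} A_l \sum_{i=0}^{l-1} \frac{(mx)^{i}}{i!} \int_{0}^{\infty} e^{-mx\theta}\, \theta^{i} f_{\Theta}(\theta)\, d\theta,
\end{equation*}
where the interchange of the (nonnegative) sum and integral is again justified by Tonelli, the integrand being dominated by $f_{\Theta}$. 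Differentiating $f_{\Theta}^{\star}(s) = \int_{0}^{\infty} e^{-s\theta} f_{\Theta}(\theta)\, d\theta$ under the integral sign $i$ times yields $\int_{0}^{\infty} e^{-s\theta}\theta^{i} f_{\Theta}(\theta)\, d\theta = (-1)^{i}{f_{\Theta}^{\star}}^{(i)}(s)$, so the inner $\theta$-integral equals $(-1)^{i}{f_{\Theta}^{\star}}^{(i)}(mx)$.

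Third, I would substitute this back and swap the order of the two sums. For a fixed $i$ the indices $l$ that contribute are exactly those with $l \geq n$ and $l \geq i+1$, i.e. $l \geq \max(i+1,n)$; collecting the coefficient of $(-1)^{i}\frac{(mx)^{i}}{i!}{f_{\Theta}^{\star}}^{(i)}(mx)$ produces $\sum_{l=\max(i+1,n)}^{\infty} A_l =: B_i$, which by the meaning of $A_l$ in Theorem \eqref{ritazcmlfr} equals $\Pr(\sum_{i=1}^{n}\sum_{j=N_i+1}^{m}\Delta_{i,j} \geq \max(i+1,n))$. This is precisely the asserted identity.

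The only genuinely delicate point is the legitimacy of the interchanges of summation and integration; I expect to dispose of it uniformly by keeping everything nonnegative throughout (Tonelli), the final reindexing $\{(l,i): l \geq n,\ 0 \leq i \leq l-1\} = \{(l,i): i \geq 0,\ l \geq \max(i+1,n)\}$ then being a harmless rearrangement. As an independent check, the same formula drops out of Corollary \eqref{ritazcml} by computing $\int_{x}^{\infty} f_{S_n}(t)\, dt$ after writing ${f_{\Theta}^{\star}}^{(l)}(mt) = (-1)^{l}\int_{0}^{\infty} e^{-mt\theta}\theta^{l} f_{\Theta}(\theta)\, d\theta$ and using the incomplete-gamma identity $\int_{x}^{\infty} t^{l-1} e^{-mt\theta}\, dt = (l-1)!\,(m\theta)^{-l} e^{-mx\theta}\sum_{i=0}^{l-1}(mx\theta)^{i}/i!$.
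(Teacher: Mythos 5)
Your proof is correct, but it takes a different route from the paper's. The paper starts from the already-derived density of $S_n$ (Corollary \eqref{ritazcml}), writes $\Pr(S_n>x)=\sum_{l\ge n}A_l\frac{(-1)^l}{\Gamma(l)}\int_{mx}^{\infty}z^{l-1}{f_{\Theta}^{\star}}^{(l)}(z)\,dz$, and reduces each integral to the finite sum $\sum_{i=0}^{l-1}(-1)^i\frac{(mx)^i}{i!}{f_{\Theta}^{\star}}^{(i)}(mx)$ by repeated integration by parts (tacitly using that the boundary terms $z^{i}{f_{\Theta}^{\star}}^{(i)}(z)$ vanish at infinity), then leaves the final resummation into the $B_i$ implicit. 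You instead work one level earlier, before mixing: you read $f_{\tilde S_n}$ from Theorem \eqref{ritazcmlfr} as an Erlang mixture with weights $A_l$, use the elementary Erlang tail expansion to get $\Pr(\tilde S_n>y)$, then condition on $\Theta$ (using $S_n=\tilde S_n/\Theta$ with $\tilde S_n$ independent of $\Theta$, the same implicit assumption the paper uses in the proof of Corollary \eqref{ritazcml}) and identify $\int_0^{\infty}e^{-mx\theta}\theta^{i}f_{\Theta}(\theta)\,d\theta=(-1)^{i}{f_{\Theta}^{\star}}^{(i)}(mx)$. The two computations meet at the same double sum and the same reindexing $\{(l,i):l\ge n,\,0\le i\le l-1\}=\{(l,i):i\ge 0,\,l\ge\max(i+1,n)\}$. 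What your version buys: it replaces the integration-by-parts step (and its unexamined boundary behaviour) by a standard probabilistic identity, it handles all interchanges by Tonelli since every term is nonnegative, it makes the probabilistic meaning of $B_i=\Pr\bigl(\sum_{k=1}^{n}\sum_{j=N_k+1}^{m}\Delta_{k,j}\ge\max(i+1,n)\bigr)$ transparent, and it spells out the sum swap that the paper's proof omits; the paper's version is shorter because it recycles the density formula for $S_n$ directly. Your closing cross-check via the incomplete-gamma identity is essentially the paper's computation rewritten with the Laplace-transform representation of ${f_{\Theta}^{\star}}^{(l)}$, so the two arguments are fully consistent.
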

\begin{proof}From \eqref{rezro}, we have
\begin{eqnarray*}
\Pr\left(S_n>x\right)&=&\sum\limits_{l=n}^{\infty}A_l(-1)^l \frac{m^{l}}{\Gamma(l)}\int_{x}^{\infty} z^{l-1}{f_{\Theta}^{\star}}^{(l)}(m z)dz\\
&=&\sum\limits_{l=n}^{\infty}A_l \frac{(-1)^l}{\Gamma(l)}\int_{mx}^{\infty} z^{l-1}{f_{\Theta}^{\star}}^{(l)}(z)dz.
\end{eqnarray*}
By integration by parts (see \citet{hartman2014theory}), one gets
\begin{eqnarray*}
\Pr\left(S_n>x\right)&=&\sum\limits_{l=n}^{\infty}A_l \sum\limits_{i=0}^{l-1} (-1)^{i}\frac{(mx)^i}{i!}{f_{\Theta}^{\star}}^{(i)}( mx),
\end{eqnarray*}
which completes the proof.
\end{proof}
 \begin{remark} If $m=1$, then one gets the survival function of the aggregated risk $S_n$ discussed in \citet{sarabia2018aggregation} and given by
\begin{eqnarray*}
\Pr\left(S_n>x\right)&=& \sum\limits_{i=0}^{n-1} \frac{x^i}{i!} (-1)^{i} {f_{\Theta}^{\star}}^{(i)}(x). 
\end{eqnarray*}
 \end{remark}
\section{TVaR with the mixed Bernstein copulas\label{sectDD} }
In this section we derive an expression of the TVaR for $n$ dependent rv's $X_{1},\cdots,X_{n}$ joined by the mixed Bernstein copulas with generator ${f_{\Theta}^{\star}}.$
\begin{theorem}\label{rezromiofrsq}
Let $X_{1},\cdots,X_{n}$ $n$ dependent rv's joined by the mixed Bernstein copulas with generator 
${f_{\Theta}^{\star}}$, then the TVaR of the aggregate risk $S_n=\sum\limits_{i=1}^{n}X_i$ is 
\begin{eqnarray}
TVaR_\kappa(S_n)&=&\frac{1}{1-\kappa}\sum\limits_{\nu=1}^{\infty} { P_\nu}\frac{(-1)^{\nu+1} m^{\nu-1} VaR_\kappa^{\nu}(S_n)}{\nu!} {f_{\Theta}^{\star}}^{(\nu-1)}\left(m VaR_\kappa(S_n)\right)\nonumber\\
&+&\frac{n}{(1-\kappa)} \int_{m VaR_\kappa(S_n)}^{\infty} {f_{\Theta}^{\star}}(x)dx, \label{rezromiofd}
\end{eqnarray}
where $P_\nu= E\left[\left({\sum\limits_{i=1}^{n}\sum\limits_{j=N_i+1}^{m}\Delta_{i,j}}\right)\mathbbm{1}_{ \left\{\sum\limits_{i=1}^{n}\sum\limits_{j=N_i+1}^{m}\Delta_{i,j}\geq \max(\nu,n) \right\}}\right]$ and $\Delta_{i,j}$ form a sequence of independent rv's, independent of $N_i,$ for $i=1,\cdots,n$ and $j=1,\cdots,m$ such that $\Delta_{i,j}$ follow a shifted geometric distribution $Geom(\frac{j}{m})$.
\end{theorem}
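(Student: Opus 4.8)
The plan is to start from the definition \eqref{fuaredsnb}. Writing $d:=VaR_\kappa(S_n)$ and using that $S_n$ is continuous (it has the density \eqref{rezro}), so that $\Pr(S_n>d)=1-\kappa$, one has
\begin{equation*}
TVaR_\kappa(S_n)=\frac{1}{1-\kappa}\,E\!\left[S_n\,\mathbbm{1}_{\{S_n>d\}}\right].
\end{equation*}
I would evaluate this expectation through the stochastic representation \eqref{eqP2derfmlo}: setting $\tilde{S}_n=Z_1+\cdots+Z_n$, so that $S_n=\tilde{S}_n/\Theta$ and $\{S_n>d\}=\{\tilde{S}_n>d\Theta\}$, conditioning on $\Theta$ gives
\begin{equation*}
E\!\left[S_n\,\mathbbm{1}_{\{S_n>d\}}\right]=\int_{0}^{\infty}\frac{1}{\theta}\,E\!\left[\tilde{S}_n\,\mathbbm{1}_{\{\tilde{S}_n>d\theta\}}\right]f_{\Theta}(\theta)\,d\theta .
\end{equation*}

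Next I would invoke Theorem \eqref{ritazcmlfr}, according to which $\tilde{S}_n$ is the countable mixture $\sum_{l\ge n}A_l\,\mathrm{Erlang}(l,m)$. For $G\sim\mathrm{Erlang}(l,m)$, absorbing a factor $l/m$ to raise the shape to $l+1$ yields the tail-mean identity $E[G\,\mathbbm{1}_{\{G>t\}}]=\tfrac{l}{m}\Pr(G'>t)=\tfrac{l}{m}\sum_{i=0}^{l}\tfrac{(mt)^{i}e^{-mt}}{i!}$ with $G'\sim\mathrm{Erlang}(l+1,m)$, hence $E[\tilde{S}_n\mathbbm{1}_{\{\tilde{S}_n>t\}}]=\sum_{l\ge n}A_l\tfrac{l}{m}\sum_{i=0}^{l}\tfrac{(mt)^{i}e^{-mt}}{i!}$. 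I would then put $t=d\theta$, use $\tfrac{1}{\theta}(md\theta)^{i}e^{-md\theta}=(md)^{i}\theta^{i-1}e^{-md\theta}$, and integrate against $f_{\Theta}$; this produces the moment integrals $\int_0^\infty\theta^{\,i-1}e^{-md\theta}f_{\Theta}(\theta)\,d\theta$, which by differentiating the Laplace transform under the integral equal $(-1)^{i-1}{f_{\Theta}^{\star}}^{(i-1)}(md)$ for $i\ge 1$, and by Fubini equal $\int_{md}^{\infty}f_{\Theta}^{\star}(u)\,du$ for $i=0$. All interchanges of sum and integral are legitimate by Tonelli, since up to this last identification every integrand is nonnegative and $\sum_{l}lA_l<\infty$.

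The final step is to reorganize the resulting double series. The $i=0$ contribution factors as $\bigl(\tfrac{1}{m}\sum_{l\ge n}lA_l\bigr)\int_{md}^{\infty}f_{\Theta}^{\star}(u)\,du$; since $\sum_{l\ge n}lA_l=E[L]$ with $L=\sum_{i=1}^{n}\sum_{j=N_i+1}^{m}\Delta_{i,j}$, and comparing the mean $l/m$ of $\mathrm{Erlang}(l,m)$ with $E[\tilde{S}_n]=\sum_{i=1}^{n}E[Z_i]=n$ forces $E[L]=mn$, this contribution is exactly $n\int_{m\,VaR_\kappa(S_n)}^{\infty}f_{\Theta}^{\star}(u)\,du$. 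For $i\ge 1$, interchanging the order of summation so that $\nu:=i$ becomes the outer index restricts $l$ to $l\ge\max(\nu,n)$, whence $\tfrac{1}{m}\sum_{l\ge\max(\nu,n)}lA_l=\tfrac{1}{m}P_\nu$; collecting $\tfrac{1}{m}(md)^{\nu}=m^{\nu-1}VaR_\kappa^{\nu}(S_n)$ and the sign $(-1)^{i-1}=(-1)^{\nu+1}$ then reproduces the first sum in \eqref{rezromiofd}, and dividing by $1-\kappa$ finishes the argument.

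I expect the main obstacle to be the combinatorial bookkeeping: the evaluation $E[L]=mn$ and, above all, the reindexing that converts the tail sums $\sum_{l\ge\max(\nu,n)}lA_l$ into $P_\nu$, together with a careful justification of the termwise integration. An essentially equivalent route bypasses the conditioning step — substitute \eqref{rezro} into $\tfrac{1}{1-\kappa}\int_d^\infty x\,f_{S_n}(x)\,dx$ and evaluate $\int_{md}^{\infty}z^{l}{f_{\Theta}^{\star}}^{(l)}(z)\,dz$ by means of the incomplete gamma identity $\Gamma(l+1,x)=l!\,e^{-x}\sum_{k=0}^{l}x^{k}/k!$, which packages the repeated integration by parts (including its non-vanishing boundary terms) cleanly — and then perform the same reindexing.
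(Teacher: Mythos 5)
Your proposal is correct and follows essentially the same route as the paper's proof: the same Erlang-mixture representation of $\tilde{S}_n$ from Theorem \ref{ritazcmlfr}, the same tail computation (your Erlang stop-loss identity with the Laplace-derivative integrals is just the paper's repeated integration by parts of $\int_{mVaR_\kappa(S_n)}^{\infty}x^{l}{f_{\Theta}^{\star}}^{(l)}(x)\,dx$ with the Fubini step over $\Theta$ performed in the opposite order), and the same reindexing giving $P_\nu=\sum_{l\geq\max(\nu,n)}lA_l$ and $\sum_{l\geq n}lA_l=mn$. The only cosmetic difference is that you obtain $E\bigl[\sum_{i=1}^{n}\sum_{j=N_i+1}^{m}\Delta_{i,j}\bigr]=mn$ by matching $E[\tilde{S}_n]=n$ against the mixture means $l/m$, whereas the paper computes this expectation directly from the geometric and uniform laws; both are valid.
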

\begin{proof}The combination of \eqref{fuaredsnb} and \eqref{rezro} yields
\begin{eqnarray}
TVaR_\kappa(S_n)&=&\frac{1}{1-\kappa}\sum\limits_{l=n}^{\infty}{A_l(-1)^l } \frac{m^{l}}{\Gamma(l)}\int_{VaR_\kappa(S_n)}^{\infty} x^{l}{f_{\Theta}^{\star}}^{(l)}(m x) (x) dx\nonumber\\
&=&\frac{1}{1-\kappa}\sum\limits_{l=n}^{\infty}\frac{A_l(-1)^l }{m\Gamma(l)} \int_{mVaR_\kappa(S_n)}^{\infty} x^{l}{f_{\Theta}^{\star}}^{(l)}(x)dx.\qquad \label{eqP1dghm}
\end{eqnarray}
On the other hand, by integration by parts (see \citet{spiegel2013mathematical}), one gets 
\begin{eqnarray*}
\int_{mVaR_\kappa(S_n)}^{\infty} x^{l} {f_{\Theta}^{\star}}^{(l)}(x)dx &=&\sum\limits_{i=1}^{l}(-1)^{l-i+1}\frac{m^i l!}{i!} VaR_\kappa^{i}(S_n){f_{\Theta}^{\star}}^{(i-1)}\left(m VaR_\kappa(S_n)\right)+(-1)^{l}l! \int_{m VaR_\kappa(S_n)}^{\infty} {f_{\Theta}^{\star}}(x)dx.
\end{eqnarray*}
Combining the last expression with \eqref{eqP1dghm} gives
\begin{eqnarray*}
TVaR_\kappa(S_n)&=&\frac{1}{1-\kappa}\sum\limits_{l=n}^{\infty} \sum\limits_{i=1}^{l}{l A_l}(-1)^{i+1}\frac{m^{i-1} }{i!} VaR_\kappa^{i}(S_n){f_{\Theta}^{\star}}^{(i-1)}\left(m VaR_\kappa(S_n)\right)\\
&+&\frac{1}{m(1-\kappa)}\sum\limits_{l=n}^{\infty}{lA_l } \int_{m VaR_\kappa(S_n)}^{\infty} {f_{\Theta}^{\star}}(x)dx. 
\end{eqnarray*}
Otherwise, from Theorem \eqref{ritazcmlfr}, we have $\sum\limits_{l=n}^{\infty}{lA_l }=E\left[\sum\limits_{i=1}^{n}\sum\limits_{j=N_i+1}^{m}\Delta_{i,j}\right]=\sum\limits_{i=1}^{n}\sum\limits_{j=1}^{m} \frac{m}{j} \Pr\left[N_i\leq j-1\right]=mn.$ Thus \eqref{rezromiofd} holds.
\end{proof}
\begin{corollary}If we take $m=1$, then the $TVaR_\kappa(S_n)$ in Theorem \eqref{rezromiofrsq} reduces to
\begin{eqnarray*}
TVaR_\kappa(S_n)&=&\frac{n}{1-\kappa} \sum\limits_{i=1}^{n}\frac{(-1)^{i+1} }{i!} VaR_\kappa^{i}(S_n){f_{\Theta}^{\star}}^{(i-1)}( VaR_\kappa(S_n))+\frac{n}{(1-\kappa)}\int_{ VaR_\kappa(S_n)}^{\infty} {f_{\Theta}^{\star}}(x)dx.
\end{eqnarray*}
\end{corollary}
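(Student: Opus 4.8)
The plan is to specialize the general $TVaR$ formula in Theorem \ref{rezromiofrsq} to the case $m=1$ and simplify. First I would recall that, as noted in the remark following Corollary \ref{ritazcml}, when $m=1$ the coefficients $A_l$ collapse to $A_n=1$ and $A_l=0$ for $l\neq n$, because the random variable $\sum_{i=1}^n\sum_{j=N_i+1}^{m}\Delta_{i,j}$ degenerates: with $m=1$ each inner product over $j=N_i+1,\dots,1$ is either empty (when $N_i=0$, which happens almost surely since $N_i$ is uniform on $\{0\}$) contributing nothing, so the whole double sum equals $n$ deterministically. Hence $P_\nu = E\bigl[(\sum\sum\Delta_{i,j})\,\mathbbm{1}_{\{\sum\sum\Delta_{i,j}\geq\max(\nu,n)\}}\bigr]$ evaluates to $n$ when $\nu\leq n$ and to $0$ when $\nu>n$, since the indicator is $1$ exactly when $\max(\nu,n)\leq n$, i.e. $\nu\leq n$.

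Next I would substitute $m=1$ and this description of $P_\nu$ directly into \eqref{rezromiofd}. The infinite sum $\sum_{\nu=1}^{\infty}P_\nu(\cdots)$ truncates to $\sum_{\nu=1}^{n} n\,(\cdots)$, and with $m=1$ the factor $m^{\nu-1}$ disappears; the second term in \eqref{rezromiofd} becomes $\frac{n}{1-\kappa}\int_{VaR_\kappa(S_n)}^{\infty}{f_{\Theta}^{\star}}(x)\,dx$. Collecting these yields exactly the claimed expression
\[
TVaR_\kappa(S_n)=\frac{n}{1-\kappa}\sum_{i=1}^{n}\frac{(-1)^{i+1}}{i!}VaR_\kappa^{i}(S_n){f_{\Theta}^{\star}}^{(i-1)}(VaR_\kappa(S_n))+\frac{n}{1-\kappa}\int_{VaR_\kappa(S_n)}^{\infty}{f_{\Theta}^{\star}}(x)\,dx,
\]
after relabelling the summation index $\nu$ as $i$.

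The only genuinely substantive point — and the one I would present carefully — is the evaluation of $P_\nu$ when $m=1$; everything else is direct substitution. A clean alternative route that avoids re-deriving $A_l$ is to start from Corollary \ref{ritazcml} itself: at $m=1$ it gives $f_{S_n}(x)=\frac{(-1)^n}{\Gamma(n)}x^{n-1}{f_{\Theta}^{\star}}^{(n)}(x)$, then plug this into the definition \eqref{fuaredsnb} of $TVaR_\kappa(S_n)$ and integrate by parts exactly as in the proof of Theorem \ref{rezromiofrsq} (this is the special case $l=n$, $A_n=1$, $m=1$ of equation \eqref{eqP1dghm}). I would briefly remark that both routes agree, since the mild subtlety is simply that at $m=1$ the factor $lA_l$ summed against the integral contributes $n$ (consistent with $\sum_l lA_l = mn = n$). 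No serious obstacle is expected here; the proof is essentially a one-line specialization together with the bookkeeping needed to see that the infinite series terminates at $\nu=n$.
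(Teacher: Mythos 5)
Your route is the same as the paper's: the paper's proof is literally the substitution $m=1$ into Theorem \eqref{rezromiofrsq}, and your evaluation of $P_\nu$ with the resulting truncation of the series at $\nu=n$ (and $m^{\nu-1}=1$, relabelling $\nu$ as $i$) is exactly what that substitution requires, so the final formula is right. There is, however, one flaw in the justification of the key degeneracy, which is precisely the step you yourself flag as the substantive one. You assert that for $m=1$ each inner sum $\sum_{j=N_i+1}^{m}\Delta_{i,j}$ is ``empty\dots contributing nothing''; taken literally this would make the double sum equal to $0$, so the indicator $\mathbbm{1}_{\{\cdot\,\geq \max(\nu,n)\}}$ would vanish for every $\nu\geq 1$, giving $P_\nu\equiv 0$ and deleting the entire first term of \eqref{rezromiofd} --- the opposite of what you conclude. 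The correct argument is: with $m=1$, $N_i$ is uniform on $\{0\}$, so $N_i\equiv 0$ and the inner sum runs over the single index $j=1$, hence equals $\Delta_{i,1}$; and $\Delta_{i,1}\sim Geom(1)$ is degenerate at $1$, since $\Pr(\Delta_{i,1}=l)=1\cdot 0^{\,l-1}$ vanishes for $l\geq 2$. Thus each inner sum equals $1$ almost surely, the double sum equals $n$ almost surely, and $P_\nu=n\,\mathbbm{1}_{\{\nu\leq n\}}$, which is the value you then use correctly. With that sentence repaired, your proof coincides with the paper's one-line specialization; your alternative route via the $m=1$ density $f_{S_n}(x)=\frac{(-1)^n}{\Gamma(n)}x^{n-1}{f_{\Theta}^{\star}}^{(n)}(x)$ and the integration by parts from the proof of Theorem \eqref{rezromiofrsq} is also valid, though it is not the path the paper takes.
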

\begin{proof}Substituting $m=1$ into \eqref{rezromiofrsq}, we get the desired result.
\end{proof} 
 To derive the TVaR-based contribution of risk $i$, $i=1,\cdots,n$ to the sum $S_n$ with the mixed Bernstein copulas presented in Section \eqref{sectB}, the following result is needed.
\begin{lemma}\label{bemola} Assume that the rv's $Z_{1},\cdots,Z_{n}$ are dependent and defined with the Bernstein copulas and distributed as exponential $Exp(1)$, such as assumed in Section \eqref{sectB}. Then, the pdf of the random vector $(Z_i,\tilde{S}_n-Z_i)$ is given by
\begin{eqnarray}
f_{Z_i,\tilde{S}_n-Z_i}(x,y)&=&\sum\limits_{k=1}^{\infty}\sum\limits_{l=n-1}^{\infty}q_{k,l}^{(i)} \tau_{k,m}(x) \tau_{l,m}(y), \label{rezrosw1}
\end{eqnarray}
for $x>0$ and $y>0,$ where $\tau_{k,\beta}\left( x\right) =\frac{\beta^k x
^{k-1}e^{-\beta x}}{\left( k-1\right) !},$ $\beta>0,$ 
\begin{eqnarray*}
q_{k,l}^{(i)}&=&\Pr\left(\sum\limits_{j=N_i+1}^{m}\Delta_{i,j}=k, {\sum\limits_{\substack{\nu=1 \\ \nu\neq i}}^{n}\sum\limits_{j=N_\nu+1}^{m}{\Delta}_{\nu,j}}=l\right),
\end{eqnarray*}
and $\Delta_{\nu,j}$ form a sequence of independent rv's, independent of $N_\nu,$ for $\nu=1,\cdots,n$ and $j=1,\cdots,m$
such that $\Delta_{\nu,j}$ follow a shifted geometric distribution $Geom(\frac{j}{m})$.
\end{lemma}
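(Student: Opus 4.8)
The plan is to run the Laplace-transform argument from the proof of Theorem~\eqref{ritazcmlfr}, but now with \emph{two} transform variables, one marking $Z_i$ and one marking $\tilde S_n-Z_i=\sum_{\nu\neq i}Z_\nu$, and then to invert.

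First I would exploit the conditional independence built into \eqref{23s}: conditionally on the indices $(\nu_1,\dots,\nu_n)$ the coordinates of $(Z_1,\dots,Z_n)$ are independent, with $Z_\nu\buildrel d \over = W_\nu\buildrel d \over = D_{m-\nu_\nu:m}$, and by \eqref{23sf}, after the reindexing $j\mapsto m-j+1$, one has $W_\nu\buildrel d \over = \sum_{j=\nu_\nu+1}^{m}Q_{j}/j$ with independent $Q_j\sim Exp(1)$. Hence the bivariate Laplace transform factorizes across coordinates (the outer sum is \emph{finite}, so no convergence issue arises here):
\begin{equation*}
E\left[e^{-sZ_i-t(\tilde S_n-Z_i)}\right]=\sum_{\nu_1=0}^{m-1}\!\!\cdots\!\!\sum_{\nu_n=0}^{m-1}\gamma\!\left(\tfrac{\nu_1}{m},\dots,\tfrac{\nu_n}{m}\right)\left(\prod_{j=\nu_i+1}^{m}\frac{j}{j+s}\right)\prod_{\substack{\nu=1\\ \nu\neq i}}^{n}\left(\prod_{j=\nu_\nu+1}^{m}\frac{j}{j+t}\right).
\end{equation*}

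Next I would apply to each inner product the same shifted-geometric identity used in Theorem~\eqref{ritazcmlfr}: the pgf of $\Delta\sim Geom(j/m)$ evaluated at $m/(m+s)$ equals $j/(j+s)$, so by independence of the $\Delta_{\nu,j}$, $\prod_{j=\nu_\nu+1}^{m}\frac{j}{j+s}=E\big[(\tfrac{m}{m+s})^{\sum_{j=\nu_\nu+1}^{m}\Delta_{\nu,j}}\big]$, and likewise for the $t$-factors. Reading $\gamma(\tfrac{\nu_1}{m},\dots,\tfrac{\nu_n}{m})=\Pr(N_1=\nu_1,\dots,N_n=\nu_n)$ from \eqref{eqPremideqmfre}, the whole sum collapses to
\begin{equation*}
E\left[e^{-sZ_i-t(\tilde S_n-Z_i)}\right]=E\left[\left(\tfrac{m}{m+s}\right)^{\sum_{j=N_i+1}^{m}\Delta_{i,j}}\left(\tfrac{m}{m+t}\right)^{\sum_{\nu\neq i}\sum_{j=N_\nu+1}^{m}\Delta_{\nu,j}}\right]=\sum_{k\ge1}\sum_{l\ge n-1}q_{k,l}^{(i)}\left(\tfrac{m}{m+s}\right)^{k}\left(\tfrac{m}{m+t}\right)^{l},
\end{equation*}
with $q_{k,l}^{(i)}$ exactly as in the statement; the ranges $k\ge1$, $l\ge n-1$ follow because $N_\nu\le m-1$ forces each inner sum to contain at least one, respectively $n-1$, summands, each of which is $\ge1$.

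Finally, since $(\tfrac{m}{m+s})^{k}$ is the Laplace transform of the Erlang density $\tau_{k,m}$ and $(\tfrac{m}{m+t})^{l}$ that of $\tau_{l,m}$, I would identify \eqref{rezrosw1} as the function whose bivariate Laplace transform is the middle expression above: the series $\sum_{k,l}q_{k,l}^{(i)}\tau_{k,m}(x)\tau_{l,m}(y)$ is a convex combination of probability densities, hence converges pointwise to a density by monotone convergence, and its transform is $\sum_{k,l}q_{k,l}^{(i)}(\tfrac{m}{m+s})^{k}(\tfrac{m}{m+t})^{l}$ by Tonelli, which coincides with the transform of $(Z_i,\tilde S_n-Z_i)$ just computed; uniqueness of the Laplace transform then completes the argument. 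I expect the only real friction to be bookkeeping: keeping the factor corresponding to $\nu=i$ separate from those with $\nu\neq i$, and getting the order-statistic reindexing right.
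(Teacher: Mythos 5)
Your proposal is correct and follows essentially the same route as the paper: compute the bivariate Laplace transform of $(Z_i,\tilde S_n-Z_i)$ from the mixture representation \eqref{23s}--\eqref{23sf}, rewrite each factor $\prod_{j=\nu+1}^{m}\frac{j}{j+s}$ as the probability generating function of a sum of independent shifted geometric variables evaluated at $\tfrac{m}{m+s}$, identify the weights $\gamma$ with the law of $(N_1,\dots,N_n)$, and invert the resulting mixture of Erlang transforms. The only difference is cosmetic — you invoke the geometric pgf directly and add the Tonelli/uniqueness justification for the inversion, whereas the paper carries out the equivalent algebraic rearrangement explicitly — so there is nothing to correct.
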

\begin{proof}Let $f_{Z_i,\tilde{S}_n-Z_i}^{\star}$ be the joint Laplace transform of the random vector $(Z_i,\tilde{S}_n-Z_i)$ defined by $f_{Z_i,\tilde{S}_n-Z_i}^{\star}(s_1,s_2)=\int_{0}^{\infty} \int_{0}^{\infty} e^{-(x s_1+ys_2)}f_{Z_i,\tilde{S}_n-Z_i}(x,y)dxdy.$ It follows from \eqref{23s} and \eqref{23sf}
\begin{eqnarray}
f_{Z_i,\tilde{S}_n-Z_i}^{\star}(s_1,s_2)&=&\displaystyle\sum\limits_{\nu_1=0}^{m-1}\cdots \displaystyle\sum\limits_{\nu_n=0}^{m-1}
\gamma\left(\frac{\nu_1}{m},\cdots,\frac{\nu_n}{m}\right)\left(\prod\limits_{j=1}^{m-\nu_i}\frac{m-j+1}{m-j+1+s_1}\right)\left(\prod\limits_{\substack{k=1 \\ k\neq i}}^{n}\prod\limits_{j=1}^{m-\nu_k}\frac{m-j+1}{m-j+1+s_2}\right)\nonumber\\
&=&\displaystyle\sum\limits_{\nu_1=0}^{m-1}\cdots \displaystyle\sum\limits_{\nu_n=0}^{m-1}
\gamma\left(\frac{\nu_1}{m},\cdots,\frac{\nu_n}{m}\right)\left(\prod\limits_{\ell=\nu_i+1}^{m}\frac{\ell}{\ell+s_1}\right)\left(\prod\limits_{\substack{k=1 \\ k\neq i}}^{n}\prod\limits_{j=\nu_k+1}^{m}\frac{j}{j+s_2}\right). \label{rezroswgtym2}
\end{eqnarray}
After some rearrangements, \eqref{rezroswgtym2} becomes
\begin{eqnarray}
f_{Z_i,\tilde{S}_n-Z_i}^{\star}(s_1,s_2)&=&\displaystyle\sum\limits_{\nu_1=0}^{m-1}\cdots \displaystyle\sum\limits_{\nu_n=0}^{m-1}
\gamma\left(\frac{\nu_1}{m},\cdots,\frac{\nu_n}{m}\right)\left(\prod\limits_{\ell=\nu_i+1}^{m}\frac{\ell}{\ell+s_1}\right)\left(\prod\limits_{j=1}^{m}\left(\frac{j}{j+s_2}\right)^{\sum\limits_{\substack{k=1 \\ k\neq i}}^{n}\mathbbm{1}_{ \left\{j\geq \nu_k+1 \right\}}}\right). \quad \qquad \label{rezroswgtym}
\end{eqnarray}
On the other hand, we have $\prod\limits_{\ell={\nu_i+1}}^{m}\left(\frac{\ell}{\ell+s_1}\right)=\left(\frac{m}{m+s_1}\right)^{m-\nu_i}\prod\limits_{\ell={\nu_i+1}}^{m}\left(\frac{\frac{\ell}{m}}{1-\left(1-\frac{\ell}{m}\right)\left(\frac{m}{m+s_1}\right)}\right)$ and $\prod\limits_{j=1}^{m}\left(\frac{j}{j+s_2}\right)^{{\sum\limits_{\substack{k=1 \\ k\neq i}}^{n}\mathbbm{1}_{ \left\{j\geq \nu_k+1 \right\}}}}=\left(\frac{m}{m+s_2}\right)^{
mn-m+\nu_i-\sum\limits_{j=1}^{n}\nu_j
}\prod\limits_{j=1}^{m}\left(\frac{\frac{j}{m}}{1-\left(1-\frac{j}{m}\right)\left(\frac{m}{m+s_2}\right)}\right)^{{\sum\limits_{\substack{k=1 \\ k\neq i}}^{n}\mathbbm{1}_{ \left\{j\geq \nu_k+1 \right\}}}}.$ Substituting the latter expressions in \eqref{rezroswgtym}, it follows that the joint Laplace transform $f_{Z_i,\tilde{S}_n-Z_i}^{\star}$ can be rewritten as
\begin{eqnarray}
&&f_{Z_i,\tilde{S}_n-Z_i}^{\star}(s_1,s_2)=E\left[\left(\frac{m}{m+s_1}\right)^{\sum\limits_{j=N_i+1}^{m}\Delta_{i,j}}\left(\frac{m}{m+s_2}\right)^{\sum\limits_{\substack{\nu=1 \\ \nu\neq i}}^{n}\sum\limits_{j=N_\nu+1}^{m}{\Delta}_{\nu,j}}\right],\quad \qquad \label{redezrosw}
\end{eqnarray}
where $\Delta_{\nu,j}$ form a sequence of independent rv's, independent of $N_\nu,$ for $\nu=1,\cdots,n$ and $j=1,\cdots,m$
such that $\Delta_{\nu,j}$ follow a shifted geometric distribution $Geom(\frac{j}{m})$.

Since the support of the rv ${\sum\limits_{\substack{\nu=1 \\ \nu\neq i}}^{n}\sum\limits_{j=N_\nu+1}^{m}{\Delta}_{\nu,j}}$ is $\{n-1,n,\cdots\},$ the inversion of \eqref{redezrosw} with respect to $s_1$ and $s_2$ yields the required result.
\end{proof}
Now, we are in a position to derive a closed-formula for the TVaR-based contribution of risk $i$, $i=1,\cdots,n$ to the sum $S_n$ with the mixed Bernstein copulas presented in Section \eqref{sectB}.
\begin{theorem}\label{redazfoypfrs}Let $X_{1},\cdots,X_{n}$ $n$ dependent rv's joined by the mixed Bernstein copulas with generator ${f_{\Theta}^{\star}}$. Then the TVaR-based contribution of risk $i$, $i=1,\cdots,n$ to the sum $S_n=X_1+\cdots+X_n$ 
at level $\kappa$, $0 <\kappa< 1$, is
\begin{eqnarray}
TVaR_\kappa(X_i;S_n)&=& \frac{1}{1-\kappa}\sum\limits_{\nu=1}^{\infty} {P}_{\nu}^{(i)}\frac{(- 1)^{\nu+1} m^{\nu-1} VaR_\kappa^\nu(S_n)
}{\nu !} {f_{\Theta}^{\star}}^{(\nu-1)}({m VaR_\kappa(S_n)})\nonumber \\
&+&\frac{1}{1-\kappa}\int_{m VaR_\kappa(S_n)}^{\infty} {f_{\Theta}^{\star}}(x)dx, \quad \quad\label{rezromiofrsfrqj}
\end{eqnarray}
where ${P}_{\nu}^{(i)}=E\left[\left({\sum\limits_{j=N_i+1}^{m}\Delta_{i,j}}\right)\mathbbm{1}_{ \left\{\sum\limits_{i=1}^{n}\sum\limits_{j=N_i+1}^{m}\Delta_{i,j}\geq \max(\nu,n) \right\}}\right]$ and $\Delta_{i,j}$ form a sequence of independent rv's, independent of $N_i,$ for $i=1,\cdots,n$ and $j=1,\cdots,m$
such that $\Delta_{i,j}$ follow a shifted geometric distribution $Geom(\frac{j}{m})$.
 \end{theorem}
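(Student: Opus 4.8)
The plan is to start from the definition \eqref{fuaredsnbr} and write
\[
TVaR_\kappa(X_i;S_n)=\frac{1}{1-\kappa}\,E\bigl[X_i\,\mathbbm{1}_{\{S_n>v\}}\bigr],\qquad v:=VaR_\kappa(S_n),
\]
which is legitimate because $S_n$ is continuous, so $\Pr(S_n>v)=1-\kappa$, and then to evaluate the numerator through the stochastic representation \eqref{eqP2derfmlo} and Lemma \eqref{bemola}. Since $X_i=Z_i/\Theta$ and $S_n-X_i=(\tilde{S}_n-Z_i)/\Theta$ with $(Z_1,\dots,Z_n)$ independent of $\Theta$, conditioning on $\Theta=\theta$ and rescaling the integration variables gives
\[
E\bigl[X_i\,\mathbbm{1}_{\{S_n>v\}}\bigr]=\int_{0}^{\infty}\frac{f_{\Theta}(\theta)}{\theta}\,E\bigl[Z_i\,\mathbbm{1}_{\{\tilde{S}_n>v\theta\}}\bigr]\,d\theta ,
\]
so the computation is reduced to the joint law of $(Z_i,\tilde{S}_n-Z_i)$, which is precisely what Lemma \eqref{bemola} supplies.

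Next I would substitute $f_{Z_i,\tilde{S}_n-Z_i}(x,y)=\sum_{k\ge1}\sum_{l\ge n-1}q_{k,l}^{(i)}\,\tau_{k,m}(x)\,\tau_{l,m}(y)$ into $E[Z_i\,\mathbbm{1}_{\{\tilde{S}_n>v\theta\}}]=\int_{x+y>v\theta}x\,f_{Z_i,\tilde{S}_n-Z_i}(x,y)\,dx\,dy$. The elementary identity $x\,\tau_{k,m}(x)=\tfrac{k}{m}\,\tau_{k+1,m}(x)$, together with the fact that the convolution of $\tau_{k+1,m}$ and $\tau_{l,m}$ is the Erlang density $\tau_{k+l+1,m}$, turns the inner integral over $\{x+y>v\theta\}$ into an Erlang tail and yields
\[
E\bigl[Z_i\,\mathbbm{1}_{\{\tilde{S}_n>v\theta\}}\bigr]=\sum_{k\ge1}\sum_{l\ge n-1}q_{k,l}^{(i)}\,\frac{k}{m}\sum_{j=0}^{k+l}e^{-mv\theta}\,\frac{(mv\theta)^{j}}{j!}.
\]
Plugging this back, interchanging the sums with the $\theta$-integral, and using $\int_{0}^{\infty}\theta^{\,j-1}e^{-mv\theta}f_{\Theta}(\theta)\,d\theta=(-1)^{j-1}{f_{\Theta}^{\star}}^{(j-1)}(mv)$ for $j\ge1$, while $\int_{0}^{\infty}\theta^{-1}e^{-mv\theta}f_{\Theta}(\theta)\,d\theta=\int_{mv}^{\infty}{f_{\Theta}^{\star}}(t)\,dt$ for $j=0$ (the latter by writing $\theta^{-1}e^{-mv\theta}=\int_{mv}^{\infty}e^{-t\theta}\,dt$ and applying Fubini), I split $E[X_i\,\mathbbm{1}_{\{S_n>v\}}]$ into a ``$j=0$'' boundary term and a ``$j\ge1$'' term.

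In the boundary term the coefficient is $\tfrac1m\sum_{k,l}k\,q_{k,l}^{(i)}=\tfrac1m\,E\bigl[\sum_{j=N_i+1}^{m}\Delta_{i,j}\bigr]$, and a short computation using that $\Delta_{i,j}$ has mean $m/j$, is independent of $N_i$, and $\Pr(N_i\le j-1)=j/m$ shows that this expectation equals $m$; hence the boundary term collapses to $\int_{mv}^{\infty}{f_{\Theta}^{\star}}(t)\,dt$, which is the second line of \eqref{rezromiofrsfrqj} after division by $1-\kappa$. For the remaining term I would interchange the $j$-summation with the $(k,l)$-summations; the coefficient of the $j$-th term becomes $\tfrac1m\sum_{k+l\ge j,\,k\ge1,\,l\ge n-1}k\,q_{k,l}^{(i)}=\tfrac1m\,E\bigl[\bigl(\sum_{j'=N_i+1}^{m}\Delta_{i,j'}\bigr)\mathbbm{1}_{\{R\ge j\}}\bigr]$ with $R=\sum_{\nu=1}^{n}\sum_{j'=N_\nu+1}^{m}\Delta_{\nu,j'}$. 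Since the support of $R$ is $\{n,n+1,\dots\}$ one may replace $\{R\ge j\}$ by $\{R\ge\max(j,n)\}$, so this coefficient equals $\tfrac1m P_{j}^{(i)}$; collecting the powers of $m$ and $v$ and dividing by $1-\kappa$ gives the first line of \eqref{rezromiofrsfrqj}.

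The routine parts are the Erlang/convolution bookkeeping and the Laplace-transform identities for the moments of $\Theta$; the step needing most care is the combinatorial rearrangement of the triple sum into the form $\sum_{\nu}P_{\nu}^{(i)}(\cdots)$ --- in particular justifying the interchange of summation orders (by Tonelli, after grouping the summands into nonnegative blocks) and using $\{R\ge\nu\}=\{R\ge\max(\nu,n)\}$ almost surely, which is exactly what makes the coefficient coincide with the $P_{\nu}^{(i)}$ of the statement. As a consistency check, summing \eqref{rezromiofrsfrqj} over $i=1,\dots,n$ should recover Theorem \eqref{rezromiofrsq}, since $\sum_i P_{\nu}^{(i)}=P_{\nu}$ and $\sum_i E\bigl[\sum_{j=N_i+1}^{m}\Delta_{i,j}\bigr]=mn$.
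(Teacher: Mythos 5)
Your proposal is correct and follows essentially the same route as the paper: condition on $\Theta$, apply Lemma \eqref{bemola} to reduce to the Erlang mixture for $(Z_i,\tilde S_n-Z_i)$, integrate the Erlang tail, use the Laplace-transform derivative identities (with the Fubini trick for the $\nu=0$ term), and rearrange the sums using that the total sum is at least $n$ so $\{R\ge\nu\}=\{R\ge\max(\nu,n)\}$. The only cosmetic difference is that you evaluate the boundary coefficient via $E\bigl[\sum_{j=N_i+1}^{m}\Delta_{i,j}\bigr]=m$, whereas the paper gets the same constant from $\sum_r a_r^{(i)}=E[Z_i]=1$.
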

\begin{proof}From \eqref{fuaredsnbr} and \eqref{eqP2derfmlo}, the capital attributed to the continuous distributed risk $i$ can be expressed as
\begin{eqnarray}
TVaR_\kappa(X_i;S_n)&=&\frac{1}{1-\kappa} E\left (X_i.1_{ \left\{S_n> VaR_\kappa(S_n) \right\}}\right)\nonumber\\
&=& \frac{1}{1-\kappa}\int_{0}^{\infty} \frac{1}{\theta}E\left (Z_i.1_{ \left\{\tilde{S}_n>\theta VaR_\kappa(S_n) \right\}}\right)f_{\Theta} (\theta)d\theta\nonumber\\
&=& \frac{1}{1-\kappa} \int_{0}^{\infty} \frac{1}{\theta}\int_{\theta VaR_\kappa(S_n)}^{\infty} \int_{0}^{s} x{f_{Z_i,\tilde{S}_n-Z_i}}(x,s-x)f_{\Theta} (\theta)dxds d\theta,\label{eqPfdfr}
\end{eqnarray}
where $\tilde{S}_n=\sum\limits_{j=1}^{n}Z_j$ and $f_{Z_i,\tilde{S}_n-Z_i}$ is the pdf of the random vector $(Z_i,\tilde{S}_n-Z_i)$. 

Using Lemma \eqref{bemola} leads to 
\begin{eqnarray}
 \int_{0}^{s} x f_{Z_{i},\tilde{S}_n-Z_i}(x,s-x)dx &=& \sum\limits_{k=1}^{\infty}\sum\limits_{l=n-1}^{\infty}q_{k,l}^{(i)}\frac{k}{m} \tau_{k+l+1,m}(s) = \sum\limits_{r=n}^{\infty}a_{r}^{(i)} \tau_{r+1,m}(s), \label{eqPfd}
\end{eqnarray}
where $ a_{r}^{(i)} = \sum\limits_{k=1}^{r-n+1}\frac{k}{m} q_{k,r-k}^{(i)}$. Consequently, inserting \eqref{eqPfd} into \eqref{eqPfdfr}, one gets
\begin{eqnarray*}
TVaR_\kappa(X_i;S_n)&=& \frac{1}{1-\kappa} \sum\limits_{r=n}^{\infty}a_{r}^{(i)}\int_{0}^{\infty} \frac{1}{\theta}\int_{\theta VaR_\kappa(S_n)}^{\infty} \tau_{r+1,m}(s) f_\Theta(\theta)ds d\theta\\
&=& \frac{1}{1-\kappa} \sum\limits_{r=n}^{\infty}a_{r}^{(i)} \sum\limits_{\nu=0}^{r}\int_{0}^{\infty} \frac{1}{\theta} \frac{\left(m \theta VaR_\kappa(S_n)\right)^\nu}{\nu !} e^{-m \theta VaR_\kappa(S_n)} f_\Theta(\theta) d\theta.
\end{eqnarray*}
Otherwise from \eqref{eqPfd}, $\sum\limits_{r=n}^{\infty}a_{r}^{(i)}= \int_{0}^{\infty} \int_{0}^{s} x f_{Z_{i},\tilde{S}_n-Z_i}(x,s-x)dx=1,$ it follows that
\begin{eqnarray*}
TVaR_\kappa(X_i;S_n)&=&- \frac{1}{1-\kappa} \sum\limits_{r=n}^{\infty}a_{r}^{(i)} \sum\limits_{\nu=1}^{r} \frac{\left(-m VaR_\kappa(S_n)\right)^\nu}{\nu !} {f_{\Theta}^{\star}}^{(\nu-1)}({m VaR_\kappa(S_n)})+\frac{1}{1-\kappa}\int_{m VaR_\kappa(S_n)}^{\infty} {f_{\Theta}^{\star}}(x)dx. \label{fuaredsnbder} 
\end{eqnarray*}
Inverting the order of summation w.r.t. the variables $r$ and $\nu$ in the last expression, one readily obtains
\begin{eqnarray*}
TVaR_\kappa(X_i;S_n)&=&- \frac{1}{1-\kappa}\sum\limits_{\nu=1}^{\infty} \sum\limits_{r=max(\nu,n)}^{\infty}a_{r}^{(i)} \frac{\left(- mVaR_\kappa(S_n)\right)^\nu}{\nu !} {f_{\Theta}^{\star}}^{(\nu-1)}({m VaR_\kappa(S_n)})\\
&+&\frac{1}{1-\kappa}\int_{m VaR_\kappa(S_n)}^{\infty} {f_{\Theta}^{\star}}(x)dx,
\end{eqnarray*}
where ${P}_{\nu}^{(i)}=\sum\limits_{r=max(\nu,n)}^{\infty}\sum\limits_{k=1}^{r-n+1}{k} q_{k,r-k}^{(i)}=E\left[\left({\sum\limits_{j=N_i+1}^{m}\Delta_{i,j}}\right)\mathbbm{1}_{ \left\{{\sum\limits_{\substack{\nu=1 }}^{n}\sum\limits_{j=N_\nu+1}^{m}{\Delta}_{\nu,j}}\geq \max(\nu,n) \right\}}\right],$ and then \eqref{rezromiofrsfrqj} immediately follows.
\end{proof}
\begin{corollary}If we take $m=1$, then \eqref{rezromiofrsfrqj} reduces to
\begin{eqnarray*}
TVaR_\kappa(X_i;S_n)&=& \frac{1}{1-\kappa}\sum\limits_{\nu=1}^{n} (-1)^{\nu+1}\frac{\left(VaR_\kappa(S_n)\right)^\nu}{\nu !} {f_{\Theta}^{\star}}^{(\nu-1)}({VaR_\kappa(S_n)})+\frac{1}{1-\kappa}\int_{VaR_\kappa(S_n)}^{\infty} {f_{\Theta}^{\star}}(x)dx.
\end{eqnarray*}
\end{corollary}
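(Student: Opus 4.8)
The plan is to specialize Theorem~\eqref{redazfoypfrs}, i.e.\ formula~\eqref{rezromiofrsfrqj}, to $m=1$ and to show that the coefficients ${P}_{\nu}^{(i)}$ degenerate to indicators. First I would record what happens to the building blocks of ${P}_{\nu}^{(i)}$ when $m=1$: the index $j$ in the relevant products and sums runs over $j=1,\dots,m$, so it takes only the value $j=1$; the shifted geometric law $Geom(\tfrac{j}{m})=Geom(1)$ then puts all of its mass at $1$, so $\Delta_{i,1}=1$ almost surely for every $i$. Similarly $N_i$ is discrete uniform on $\{0,\dots,m-1\}=\{0\}$, hence $N_i=0$ almost surely. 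Consequently $\sum_{j=N_i+1}^{m}\Delta_{i,j}=\Delta_{i,1}=1$ for each $i$, and $\sum_{i=1}^{n}\sum_{j=N_i+1}^{m}\Delta_{i,j}=n$ deterministically. (Equivalently, the order-statistic representation~\eqref{23sf} degenerates so that $\tilde S_n$ is a sum of $n$ i.i.d.\ $Exp(1)$ variables, which is the $Erlang$ structure underlying the $m=1$ remarks recorded above.)

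Next I would substitute these facts into ${P}_{\nu}^{(i)}=E\left[\left(\sum_{j=N_i+1}^{m}\Delta_{i,j}\right)\mathbbm{1}_{\left\{\sum_{i=1}^{n}\sum_{j=N_i+1}^{m}\Delta_{i,j}\geq\max(\nu,n)\right\}}\right]$. The prefactor is identically $1$, and the event reduces to $\{n\geq\max(\nu,n)\}$, which is true exactly when $\nu\leq n$. Hence ${P}_{\nu}^{(i)}=\mathbbm{1}_{\{1\leq\nu\leq n\}}$, so the outer series over $\nu$ in~\eqref{rezromiofrsfrqj} truncates to the finite sum $\sum_{\nu=1}^{n}$.

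Finally I would simplify the remaining $m$-dependent factors: with $m=1$ one has $m^{\nu-1}=1$ and $mVaR_\kappa(S_n)=VaR_\kappa(S_n)$, both inside the sum and in the tail integral. Inserting ${P}_{\nu}^{(i)}=1$ for $1\le\nu\le n$ into~\eqref{rezromiofrsfrqj} then gives precisely the claimed expression. There is essentially no obstacle in this argument; the only point needing a moment's care is the collapse of $Geom(1)$ (and of the discrete uniform on $\{0\}$) to a point mass, after which everything is a direct substitution. As a sanity check one may observe that the resulting contribution does not depend on $i$ --- as it should, since for $m=1$ the copula $C$ is Archimedean and the $X_i$ share the common marginal $f_{\Theta}^{\star}$, so the $X_i$ are exchangeable and each carries an equal share of $TVaR_\kappa(S_n)$; indeed $n$ times the displayed formula reproduces the $m=1$ expression for $TVaR_\kappa(S_n)$ obtained in the corollary following Theorem~\eqref{rezromiofrsq}.
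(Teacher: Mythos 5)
Your proposal is correct and follows essentially the same route as the paper: the paper's proof simply asserts that for $m=1$ one has ${P}_{\nu}^{(i)}=1$ for $\nu=1,\cdots,n$ and $0$ for $\nu>n$, which is exactly what you establish (with the helpful extra detail that $Geom(1)$ and the uniform law on $\{0\}$ degenerate to point masses, forcing $\sum_{i=1}^{n}\sum_{j=N_i+1}^{m}\Delta_{i,j}=n$), after which the substitution $m^{\nu-1}=1$, $mVaR_\kappa(S_n)=VaR_\kappa(S_n)$ gives the claim.
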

\begin{proof}If we take $m=1$, it follows from \eqref{rezromiofrsfrqj} that ${{{P}_{\nu}^{(i)}=1}}$ if $\nu=1,2,\cdots,n$ and $0$ if $\nu > n.$ Therefore, the proof is complete.
\end{proof}
\begin{corollary} Let $X_{1},\cdots,X_{n}$ $n$ dependent rv's joined by the mixed Bernstein copulas with generator ${f_{\Theta}^{\star}}$. Then 
\begin{eqnarray*}
TVaR_\kappa(S_n)&=&\sum\limits_{i=1}^{n}TVaR_\kappa(X_i;S_n).
\end{eqnarray*}
\end{corollary}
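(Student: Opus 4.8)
The plan is to prove the identity in the way that best fits the development above, namely by summing the closed form of Theorem \ref{redazfoypfrs}, while also recording the one-line general argument. I would begin with the direct route from the defining relations \eqref{fuaredsnb} and \eqref{fuaredsnbr}: since $VaR_\kappa(S_n)$ is a deterministic number and $S_n=\sum_{i=1}^{n}X_i$, conditioning on the fixed event $\{S_n>VaR_\kappa(S_n)\}$ together with linearity of expectation gives
\begin{equation*}
TVaR_\kappa(S_n)=E\left(\sum_{i=1}^{n}X_i \mid S_n>VaR_\kappa(S_n)\right)=\sum_{i=1}^{n}E\left(X_i \mid S_n>VaR_\kappa(S_n)\right)=\sum_{i=1}^{n}TVaR_\kappa(X_i;S_n).
\end{equation*}
The only thing to note is that each $TVaR_\kappa(X_i;S_n)$ is finite, being bounded above by $TVaR_\kappa(S_n)$ because $0\le X_i\le S_n$; so the rearrangement is harmless.

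Next I would give the version that also serves as a consistency check on Theorems \ref{rezromiofrsq} and \ref{redazfoypfrs}: sum \eqref{rezromiofrsfrqj} over $i=1,\dots,n$, interchanging the finite sum over $i$ with the series over $\nu$ (legitimate by the same finiteness just mentioned). The series term requires only that the coefficients combine correctly, which follows from linearity of expectation inside the indicator:
\begin{equation*}
\sum_{i=1}^{n}{P}_{\nu}^{(i)}=E\left[\left(\sum_{i=1}^{n}\sum_{j=N_i+1}^{m}\Delta_{i,j}\right)\mathbbm{1}_{\left\{\sum_{i=1}^{n}\sum_{j=N_i+1}^{m}\Delta_{i,j}\ge\max(\nu,n)\right\}}\right]=P_\nu,
\end{equation*}
where $P_\nu$ is exactly the coefficient appearing in \eqref{rezromiofd}. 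The remaining term sums trivially, $\sum_{i=1}^{n}\frac{1}{1-\kappa}\int_{m VaR_\kappa(S_n)}^{\infty}{f_{\Theta}^{\star}}(x)\,dx=\frac{n}{1-\kappa}\int_{m VaR_\kappa(S_n)}^{\infty}{f_{\Theta}^{\star}}(x)\,dx$, which is the second term of \eqref{rezromiofd}. Adding the two pieces reproduces the right-hand side of \eqref{rezromiofd}, that is, $TVaR_\kappa(S_n)$.

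I do not anticipate any genuine obstacle: the statement is the standard coherence/additivity property of TVaR-based capital allocations, and in the present framework it is forced by linearity of expectation. The only point that deserves an explicit sentence is the justification for exchanging the order of the sums over $i$ and $\nu$ in the second approach; this is immediate from the finiteness of the individual contributions $TVaR_\kappa(X_i;S_n)$, and if one wishes to avoid it entirely, the one-line argument of the first paragraph already completes the proof.
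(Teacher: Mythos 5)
Your proposal is correct and its second part is exactly the paper's argument: the paper proves the corollary by summing the closed forms of Theorems \ref{rezromiofrsq} and \ref{redazfoypfrs}, and your verification that $\sum_{i=1}^{n}P_{\nu}^{(i)}=P_\nu$ (by linearity of expectation inside the indicator) together with the trivial summation of the integral term is precisely the computation the paper leaves implicit in its one-line proof. Your additional first paragraph --- additivity of conditional expectation over the fixed event $\{S_n>VaR_\kappa(S_n)\}$ --- is the general fact the paper already invokes without proof in the introduction, so it is a valid (and even simpler) alternative, with the closed-form summation then serving as a consistency check on the two theorems rather than as the only route to the identity.
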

\begin{proof}The result follows easily from Theorems \eqref{rezromiofrsq} and \eqref{redazfoypfrs}.
 \end{proof} 
 \begin{corollary} Let $X_{1},\cdots,X_{n}$ $n$ dependent rv's joined by the mixed Bernstein copulas with generator ${f_{\Theta}^{\star}}$. If the joint	probability	mass function of the random vector $\left(N_1,\cdots,N_n\right)$ is exchangeable, then 
\begin{eqnarray*}
TVaR_\kappa(X_i;S_n)&=&\frac{1}{n}TVaR_\kappa(S_n).
\end{eqnarray*}
\end{corollary}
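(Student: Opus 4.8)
The plan is to show that $TVaR_\kappa(X_i;S_n)$ does not depend on the index $i$; once this is known, the additivity relation $TVaR_\kappa(S_n)=\sum_{i=1}^{n}TVaR_\kappa(X_i;S_n)$ proved in the preceding corollary forces each of the $n$ equal contributions to be $\tfrac1n\,TVaR_\kappa(S_n)$, which is the claim.

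First I would argue that exchangeability of $(N_1,\dots,N_n)$ propagates to exchangeability of the full vector $(X_1,\dots,X_n)$. By the representation \eqref{eqPremideqmfre}, together with the standing convention $m_i=m$ for all $i$, the Bernstein copula density $c_B(u_1,\dots,u_n)=\sum_{\nu_1=0}^{m-1}\cdots\sum_{\nu_n=0}^{m-1}\Pr(N_1=\nu_1,\dots,N_n=\nu_n)\prod_{i=1}^{n}m\,G_{\nu_i:m-1}(u_i)$ is a symmetric function of $(u_1,\dots,u_n)$ whenever $\Pr(N_1=\nu_1,\dots,N_n=\nu_n)$ is invariant under permutations of its arguments; hence $C_B$ is an exchangeable copula. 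Substituting this into the mixture representation \eqref{eqP}, the survival function $\bar H(x_1,\dots,x_n)=\int_0^\infty C_B(e^{-\theta x_1},\dots,e^{-\theta x_n})\,f_\Theta(\theta)\,d\theta$ is symmetric in $(x_1,\dots,x_n)$, and since all univariate marginals coincide ($\bar H_i=f_\Theta^{\star}$ for every $i$ by \eqref{fuarezdef}), the random vector $(X_1,\dots,X_n)$ is exchangeable.

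Next, because $S_n=X_1+\cdots+X_n$ is a symmetric function of its arguments, exchangeability of $(X_1,\dots,X_n)$ yields $(X_i,S_n)\stackrel{d}{=}(X_1,S_n)$ for every $i=1,\dots,n$. As $VaR_\kappa(S_n)$ depends only on the law of $S_n$, this gives $TVaR_\kappa(X_i;S_n)=E\!\left[X_i\mid S_n>VaR_\kappa(S_n)\right]=E\!\left[X_1\mid S_n>VaR_\kappa(S_n)\right]$ for all $i$, so the $n$ contributions are equal; summing and using the additivity corollary completes the argument. Alternatively, one can read the same conclusion straight off the formula in Theorem \eqref{redazfoypfrs}: writing $T_i=\sum_{j=N_i+1}^{m}\Delta_{i,j}$, the vectors $(\Delta_{i,1},\dots,\Delta_{i,m})$ are i.i.d.\ across $i$ and independent of $(N_1,\dots,N_n)$, so exchangeability of $(N_1,\dots,N_n)$ makes $(T_1,\dots,T_n)$ exchangeable; since $P_\nu^{(i)}=E\!\left[T_i\,\mathbbm{1}_{\{T_1+\cdots+T_n\ge\max(\nu,n)\}}\right]$ and $T_1+\cdots+T_n$ is symmetric, $P_\nu^{(i)}$ is independent of $i$, and $P_\nu^{(i)}$ is the only $i$-dependent ingredient of \eqref{rezromiofrsfrqj}.

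There is essentially no analytic obstacle here; the only points requiring care are invoking the convention $m_i=m$ (without which permuting the arguments of $C_B$ is not even meaningful) and noting that the marginals are identical so that exchangeability of the copula genuinely transfers to exchangeability of the random vector — after that, the equality of the contributions and the normalization by $\tfrac1n$ are immediate.
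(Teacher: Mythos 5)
Your proposal is correct, and in fact it contains two independent arguments, the second of which is exactly the paper's route. The paper's proof is a one-liner: it reads the result off Theorem \eqref{redazfoypfrs}, the point being that under exchangeability of $(N_1,\dots,N_n)$ the quantities ${P}_{\nu}^{(i)}$ do not depend on $i$ (and, since $\sum_{i=1}^{n}{P}_{\nu}^{(i)}=P_\nu$, each equals $P_\nu/n$, so comparing \eqref{rezromiofrsfrqj} with \eqref{rezromiofd}, including the factor $n$ in the integral term of the latter, gives the claim directly). Your alternative remark about $T_i=\sum_{j=N_i+1}^{m}\Delta_{i,j}$ being exchangeable, with ${P}_{\nu}^{(i)}$ the only $i$-dependent ingredient of the formula, is precisely this. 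Your primary argument is genuinely different and more structural: exchangeability of the pmf of $(N_1,\dots,N_n)$ makes $c_B$, hence $C_B$, hence $\bar H$ symmetric (using $m_i=m$ and identical marginals $f_\Theta^\star$), so $(X_1,\dots,X_n)$ is exchangeable, $(X_i,S_n)\stackrel{d}{=}(X_1,S_n)$, the contributions are equal, and the additivity corollary forces each to be $\tfrac1n TVaR_\kappa(S_n)$. This route buys more: it does not use the closed-form expression of Theorem \eqref{redazfoypfrs} at all (so it would survive any convergence or rearrangement concerns in that derivation, and it works for any risk-contribution functional of the form $E[g(X_i)\,h(S_n)]$), whereas the paper's route buys brevity and an explicit quantitative identity ${P}_{\nu}^{(i)}=P_\nu/n$. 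Both are sound; your write-up correctly flags the two places needing care (the convention $m_i=m$ and the identical marginals).
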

\begin{proof}The proof follows from Theorem \eqref{redazfoypfrs} by using elementary calculus.
 \end{proof} 
\section{Models\label{sectD}}
In this section, we present some results as consequences of our main results stated previously. We will consider dependent models with different claim distributions of the type Pareto and Gamma distributions, see \citet{Albrecher11} and \citet{sarabia2018aggregation} for more details.
\subsection{Pareto claims with Clayton copula dependence}
 We assume that $\Theta$ has a gamma distribution, $\Theta\sim Gamma(a,b ),$ with pdf $f_\Theta(\theta)=\frac{b ^a }{\Gamma(a )}\theta^{a -1}e^{-b \theta}$, and a Laplace transform $f_{\Theta}^{\star}$ defined by 
 \begin{eqnarray}
 f_{\Theta}^{\star} (x)=\left(1+\frac{x}{b}\right)^{-a },\quad a \geq 1. \label{rezriondelosqza}
 \end{eqnarray}
 It follow that $X_i \sim Pareto(a,b )$ with survival function given by $\bar{H}_{i}(x)=f_{\Theta}^{\star} (x)=\left(1+\frac{x}{b }\right)^{-a },\quad a \geq 1,$ for $i=1,\cdots,n.$ Using \eqref{rezriondelosqza}, the expression for $C\left( u_1,\cdots,u_n \right) $ in \eqref{ritazcmlop346h} turns into 
\begin{eqnarray}
C\left( u_1,\cdots,u_n \right)&=&\displaystyle\sum\limits_{\ell_1=0}^{m}\cdots\displaystyle\sum\limits_{\ell_n=0}^{m}b _{\ell_1,\cdots,\ell_n} \left(1+\sum\limits_{i=1}^{n} \ell_i (u_i^{-\frac{1}{a }}-1) \right)^{-a }, \label{ritazcmlop346} 
\end{eqnarray}
where
$b _{\ell_1,\cdots,\ell_n}=\displaystyle\sum\limits_{\nu_1=0}^{\ell_1}\cdots\displaystyle\sum\limits_{\nu_n=0}^{\ell_n}(-1)^{\sum\limits_{i=1}^{n}(\ell_i-\nu_i)}\left[\prod\limits_{i=1}^{n}\binom{m-\nu_i}{m-\ell_i} \right]\left[\prod\limits_{i=1}^{n}\binom{m}{\nu_i} \right]\alpha \left(\frac{\nu_1}{m},\cdots,\frac{\nu_n}{m}\right).$

Using \eqref{treyfre}, the expression for the survival function of $(X_{1},\cdots,X_{n})$ turns into
\begin{eqnarray}
\bar{H}\left(x_1,\cdots,x_n\right)&=&\displaystyle\sum\limits_{\ell_1=0}^{m}\cdots\displaystyle\sum\limits_{\ell_n=0}^{m} b _{\ell_1,\cdots,\ell_n} \left(1+\sum\limits_{i=1}^{n} \frac{\ell_i x_i}{b }\right)^{-a }, \label{ritazcmlop34622} 
\end{eqnarray}
which is the joint survival function of a Pareto distribution. 
 \begin{remark} Note that if $m=1$, then the survival function $\bar{H}\left(x_1,\cdots,x_n\right)$ in \eqref{ritazcmlop34622} reduces to the joint survival function of a Pareto type \rom{2} distribution proposed by \citet{arnold1983pareto,arnold2015pareto} and given by $\bar{H}\left(x_1,\cdots,x_n\right)= \left(1+\sum\limits_{i=1}^{n} \frac{ x_i}{b }\right)^{-a }.$
\end{remark}
In the following theorem, we give a close expression for the pdf of the aggregated risk for the special case of the Pareto claim with Clayton copula dependence.
\begin{theorem}\label{ritazc} Let $S_n=X_1+\cdots+X_n$ be the sum of $n$ dependent rv's with joint cdf defined by the multivariate mixed Bernstein copulas. Then the pdf of the aggregated random variable is given by
 \begin{eqnarray}
f_{S_n}(x)&=&\sum\limits_{\ell=n}^{\infty}A_l\frac{m^{l}x^{l-1}}{b ^l B(l,a ){\left(1+m x/b \right)}^{a +l}}, \label{rezrompogh}
\end{eqnarray}
for $x>0,$ where $B(l,a )=\frac{\Gamma(l)\Gamma(a )}{\Gamma(l+a )}$ denotes the Beta function.
\end{theorem}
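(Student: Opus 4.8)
The plan is to obtain \eqref{rezrompogh} by specializing the general density formula \eqref{rezro} of Corollary \eqref{ritazcml} to the Gamma mixing case, in which the generator is the explicit Laplace transform \eqref{rezriondelosqza}, $f_{\Theta}^{\star}(x)=\left(1+x/b\right)^{-a}$. So the argument is essentially a direct substitution followed by simplification of the Gamma/Beta bookkeeping; there is no new probabilistic content beyond what is already in Theorem \eqref{ritazcmlfr} and Corollary \eqref{ritazcml}.

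First I would compute the $l$-th derivative of the generator. By induction (equivalently, by the generalized power rule), for $l\geq 1$,
\begin{eqnarray*}
{f_{\Theta}^{\star}}^{(l)}(x)&=&(-1)^{l}\,a(a+1)\cdots(a+l-1)\,\frac{1}{b^{l}}\left(1+\frac{x}{b}\right)^{-a-l}=(-1)^{l}\,\frac{\Gamma(a+l)}{\Gamma(a)}\,\frac{1}{b^{l}}\left(1+\frac{x}{b}\right)^{-a-l}.
\end{eqnarray*}
Evaluating at $mx$ and absorbing the sign, $(-1)^{l}{f_{\Theta}^{\star}}^{(l)}(mx)=\dfrac{\Gamma(a+l)}{\Gamma(a)\,b^{l}}\left(1+\dfrac{mx}{b}\right)^{-a-l}$.

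Next I would insert this into \eqref{rezro}, giving
\begin{eqnarray*}
f_{S_n}(x)&=&\sum\limits_{l=n}^{\infty}A_l\,\frac{m^{l}}{\Gamma(l)}\,x^{l-1}\,\frac{\Gamma(a+l)}{\Gamma(a)\,b^{l}}\left(1+\frac{mx}{b}\right)^{-a-l},
\end{eqnarray*}
and then recognize, via the definition $B(l,a)=\dfrac{\Gamma(l)\Gamma(a)}{\Gamma(l+a)}$ recorded in the statement, that $\dfrac{\Gamma(a+l)}{\Gamma(l)\Gamma(a)}=\dfrac{1}{B(l,a)}$. This turns the last display into exactly \eqref{rezrompogh}.

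The computation is purely algebraic once \eqref{rezro} is granted, so the only point requiring a word of care is the interchange of differentiation and the infinite sum implicit in applying Corollary \eqref{ritazcml}; but since \eqref{rezro} is already established as a valid density and $f_{\Theta}^{\star}$ is smooth with the derivatives above, no extra justification is needed here. I expect the main (and quite minor) obstacle to be simply keeping the Pochhammer-to-Gamma-to-Beta conversion straight in the $l$-th derivative step.
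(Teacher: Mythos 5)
Your proposal is correct and matches the paper's proof: both compute ${f_{\Theta}^{\star}}^{(l)}(x)=\frac{(-1)^{l}\Gamma(a+l)}{b^{l}\Gamma(a)\left(1+x/b\right)^{a+l}}$ and substitute it into \eqref{rezro}, with the Beta-function identity finishing the algebra. Nothing further is needed.
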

\begin{proof}Taking the $l$th order derivative of \eqref{rezriondelosqza} with respect to $x$ yields
\begin{equation}
{f_{\Theta}^{\star}}^{(l)}(x)= \frac{(-1)^l\Gamma(a +l)}{b ^l\Gamma(a ){\left(1+x/b \right)}^{a +l}}, \quad l=0,1,\cdots. \label{redazDFni}
\end{equation}
 Substituting the last expression into \eqref{rezro}, we get the desired result.
\end{proof} 
 \begin{remark}\label{eqP2derfml} If we take $m=1$ in \eqref{rezrompogh}, then the pdf of the aggregated risk reduces to 
 \begin{eqnarray}
f_{S_n}(x)&=&\frac{x^{n-1}}{b ^n B(n,a ) {\left(1+x/b \right)}^{a +n}}, \label{redazDFnop}
\end{eqnarray}
which is the pdf of the aggregated risk $S_n$ with Pareto
marginal distribution with shape parameter $a $, scale parameter $b $ and Clayton survival
copula discussed in \citet{sarabia2016risk,sarabia2018aggregation}.
 \end{remark}
 \subsection{Gamma claims with dependence claims}
 Our next model is based on a Gamma claim distribution, $X_i \sim Gamma(a,\lambda)$, for $a \leq 1,$ it follows that the survival function of the claim $X_i$ is 
\begin{eqnarray}
\bar{H}_{i}(x)=f_{\Theta}^{\star} (x)=\frac{\Gamma(a,\lambda x)}{\Gamma(a )},\qquad a \leq 1.\label{rezrionkmsfr}
\end{eqnarray}
Using \eqref{ritazcmlop3469}, the multivariate survival function of $(X_{1},\cdots,X_{n})$ can be written as
\begin{eqnarray*}
\bar{H}\left(x_1,\cdots,x_n\right)&=&\displaystyle\sum\limits_{\ell_1=0}^{m}\cdots\displaystyle\sum\limits_{\ell_n=0}^{m} b _{\ell_1,\cdots,\ell_n} \frac{\Gamma\left(a,\lambda \sum\limits_{i=1}^{n} \ell_i x_i\right)}{\Gamma(a )}.
\end{eqnarray*}
We are now ready to apply the preceding result to derive a closed expression for the pdf of the aggregated risk with dependent 
Gamma claim.
\begin{theorem}\label{ritazc} Let $S_n=X_1+\cdots+X_n$ be the sum of $n$ dependent rv's with gamma marginal distributions and with joint cdf defined by the multivariate mixed Bernstein copulas. Then the pdf of the aggregated random variable can be written as a finite mixture of Gamma distributions 
\begin{eqnarray}
f_{S_n}(x)&=&\sum\limits_{k=1}^{\infty} \omega_{k} f_{\mathcal{G}(a +k-1,\lambda m)}(x), \label{eqPiuy}
\end{eqnarray}
where $\omega_{k}=\sum\limits_{l=max(k,n)}^{\infty} \frac{A_l \Gamma(a +k-1)}{\Gamma(k)\Gamma(l-k+1)\Gamma(a )}(-1)^{l-k}(a -1)_{l-k}.$
\end{theorem}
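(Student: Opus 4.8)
The plan is to start from the general density formula of Corollary~\ref{ritazcml}, namely
\[
f_{S_n}(x)=\sum_{l=n}^{\infty}A_l(-1)^l\frac{m^{l}}{\Gamma(l)}x^{l-1}{f_{\Theta}^{\star}}^{(l)}(mx),
\]
and to feed in the derivatives of the Gamma Laplace transform \eqref{rezrionkmsfr}. Since $f_{\Theta}^{\star}(x)=\Gamma(a,\lambda x)/\Gamma(a)$, one differentiation of the upper incomplete gamma function gives ${f_{\Theta}^{\star}}^{(1)}(x)=-\frac{\lambda^{a}}{\Gamma(a)}x^{a-1}e^{-\lambda x}$, and the Leibniz rule applied to the $(l-1)$-th derivative of $x^{a-1}e^{-\lambda x}$ yields, for $l\ge 1$,
\[
{f_{\Theta}^{\star}}^{(l)}(x)=-\frac{\lambda^{a}}{\Gamma(a)}\,e^{-\lambda x}\sum_{j=0}^{l-1}\binom{l-1}{j}(a-1)(a-2)\cdots(a-j)\,(-\lambda)^{l-1-j}x^{a-1-j},
\]
where $(a-1)(a-2)\cdots(a-j)$ (empty product $=1$ when $j=0$) is the descending factorial that will become the factor $(a-1)_{l-k}$ in the statement.

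Next I would substitute $x\mapsto mx$, insert this into the formula for $f_{S_n}$, and collect the powers of $m$, $\lambda$, $x$ together with the sign factors; the factors $(-1)^l$, $(-1)$ (from ${f_{\Theta}^{\star}}^{(1)}$) and $(-\lambda)^{l-1-j}$ combine into a single $(-1)^{j}$, and every resulting term is a constant multiple of $x^{a+(l-j)-2}e^{-\lambda m x}$, i.e. (up to normalisation) a $\mathrm{Gamma}(a+l-j-1,\lambda m)$ density. The key manipulation is the reindexing $k=l-j$, which groups all contributions carrying the power $x^{a+k-2}$; since $l\ge n$ and $j=l-k\ge 0$, for fixed $k\ge 1$ the inner summation runs over $l\ge\max(k,n)$. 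After interchanging the order of the two summations and using $\binom{l-1}{l-k}/\Gamma(l)=1/\big(\Gamma(k)\Gamma(l-k+1)\big)$ together with $m^{a+k-1}\lambda^{a+k-1}=(\lambda m)^{a+k-1}$, one absorbs the constant $(\lambda m)^{a+k-1}/\Gamma(a+k-1)$ to recognise $f_{\mathcal{G}(a+k-1,\lambda m)}$, and reads off
\[
\omega_k=\sum_{l=\max(k,n)}^{\infty}\frac{A_l\,\Gamma(a+k-1)}{\Gamma(k)\,\Gamma(l-k+1)\,\Gamma(a)}\,(-1)^{l-k}\,(a-1)(a-2)\cdots(a-l+k),
\]
which is exactly \eqref{eqPiuy} with $(a-1)_{l-k}=(a-1)(a-2)\cdots(a-l+k)$.

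I expect the main obstacles to be purely a matter of bookkeeping: tracking the sign factors $(-1)^l(-\lambda)^{l-1-j}$ and checking that they collapse to the single $(-1)^{l-k}$ appearing in $\omega_k$, correctly handling the empty product in $(a-1)(a-2)\cdots(a-j)$, and justifying the interchange of the two infinite sums. The last point is harmless since, for a fixed $x>0$, one can bound $|{f_{\Theta}^{\star}}^{(l)}(mx)|$ and dominate the double series termwise by a convergent series built from the weights $A_l$ (which satisfy $\sum_{l\ge n}A_l=1$); alternatively, absolute convergence follows from the fact that the displayed formula in Corollary~\ref{ritazcml} already defines a genuine probability density. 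As a consistency check, integrating \eqref{eqPiuy} in $x$ and using $\sum_{l\ge n}A_l=1$ confirms $\sum_{k\ge1}\omega_k=1$, so the right-hand side is indeed a (countable) mixture of Gamma laws.
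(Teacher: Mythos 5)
Your proposal is correct and follows essentially the same route as the paper: you derive the $l$th derivative of $f_{\Theta}^{\star}$ (the paper's equation \eqref{redazDFnipo}, which you obtain explicitly via the Leibniz rule), substitute it into the density formula \eqref{rezro} of Corollary \ref{ritazcml}, reindex with $k=l-j$ so the inner sum runs over $l\geq\max(k,n)$, and identify the Gamma$(a+k-1,\lambda m)$ densities, exactly as in the paper's proof. The added remarks on justifying the interchange of summations and the check $\sum_k\omega_k=1$ are sensible refinements but do not change the argument.
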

\begin{proof}According to \eqref{rezrionkmsfr}, the $l$th-order derivative of $f_{\Theta}^{\star}$ is given by
\begin{equation}
{f_{\Theta}^{\star}}^{(l)}(x)= -\frac{\lambda^a }{\Gamma(a )}\sum\limits_{k=0}^{l-1}\binom{l-1}{k}(-1)^{l-k-1}\lambda^{l-k-1}e^{-\lambda x}(a -1)_{k}x^{a -k-1},
\label{redazDFnipo}
\end{equation}
with $(a)_n=a(a-1)\cdots (a-n+1)$ is the Pochhammer symbol.
 Substituting \eqref{redazDFnipo} into \eqref{rezro}, one gets
\begin{eqnarray*}
f_{S_n}(x)&=&\sum\limits_{l=n}^{\infty} \frac{A_l}{\Gamma(l)\Gamma(a )}\sum\limits_{k=0}^{l-1}\binom{l-1}{k}(-1)^{k}m^{a +l-k-1}\lambda^{a +l-k-1}(a -1)_{k} x^{a +l-k-2} e^{-\lambda m x}\\
&=&\sum\limits_{l=n}^{\infty} \frac{A_l}{\Gamma(l)\Gamma(a )}\sum\limits_{k=1}^{l}\binom{l-1}{l-k}(-1)^{l-k}m^{a +k-1}\lambda^{a +k-1}(a -1)_{l-k} x^{a +k-2} e^{-\lambda m x}.
\end{eqnarray*}
Equation \eqref{eqPiuy} automatically follows.
\end{proof} 
 \begin{remark} Substituting $m=1$ into \eqref{eqPiuy}, one obtains,
 \begin{eqnarray}
f_{S_n}(x)&=&\sum\limits_{k=1}^{n} \omega_{k} f_{\mathcal{G}(a +k-1,\lambda )}(x), \label{redazDFnoplm}
\end{eqnarray}
where 
\begin{eqnarray*}
\omega_{k}&=& \left\{
 \begin{array}{ll}
\frac{ \Gamma(a +k-1)}{\Gamma(k)\Gamma(n-k+1)\Gamma(a )}(-1)^{n-k}(a -1)_{n-k}& \mbox{if } k=1,2,\cdots,n \\
0 & \mbox{if } k=n+1,n+2,\cdots,\label{rezrolkj12lksq}
 \end{array}
\right.
\end{eqnarray*}
then \eqref{redazDFnoplm} reduces to the pdf of the aggregated risk $S_n$ with gamma
marginal distributions discussed in \citet{sarabia2018aggregation}.
 \end{remark}
 \section{Numerical illustrations\label{sectEE}}
In this section, numerical examples are given to illustrate our findings. We assume that the rv $\Theta$ has a Gamma distribution, $Gamma(a,b )$. In the first example, the Bernstein copula is based on an exchangeable copula, while in the second one, we use a non exchangeable copula. 
\begin{example}
In this example, the values for the risk measures $VaR_{0.95}(S_{2})$ and $TVaR_{0.95}(S_{2})$ are computed for different values of $m$ in the following two cases 
\begin{description}
 \item[(i)] The copula $\alpha$ is comonotonic, i.e.,
 \begin{equation*}
\alpha(u_{1},u_{2})= min(u_{1},u_{2}).
\end{equation*}%
 \item[(ii)] The copula $\alpha$ is counter-comonotonic, i.e.,
 \begin{equation*}
\alpha(u_{1},u_{2})= max(u_{1}+u_{2}-1,0).
\end{equation*}%
\end{description}
The obtained results are displayed in Table \ref{tab1}. 
 \begin{table}[H]
\centering
\caption{Impact of $m$ on the VaR, TvaR in the case of an exchangeable copula}
\begin{tabularx}
{\textwidth}{bsssssss}
\Xhline{3.5\arrayrulewidth}
\Xhline{3.5\arrayrulewidth}
 Case (i)& $m=1$ & $m=5$ & $m=10$ &$m=20$ &$m=30$ &$m=40$ &$m=50$ \\
\Xhline{3.5\arrayrulewidth}
 $VaR_{0.95}(S_{2})$& 139.12 & 155.60& 159.76& 162.15 & 162.95 & 163.34& 163.55 \\
 $TVaR_{0.95}(S_{2})$ &205.30 & 233.06 & 241.33 & 247.00 & 249.30 & 250.57 & 251.37\\
\Xhline{3.5\arrayrulewidth}
\Xhline{3.5\arrayrulewidth}
Case (ii) & $m=1$ & $m=5$ & $m=10$ &$m=20$ &$m=30$ &$m=40$ &$m=50$ \\
\Xhline{3.5\arrayrulewidth}
 $VaR_{0.95}(S_{2})$& 139.12 & 123.41& 119.98 & 118.06 & 117.39 & 117.05 & 116.84\\
 $TVaR_{0.95}(S_{2})$ &205.30 & 178.71 & 173.63 & 170.91 & 169.98 & 169.51 & 169.22\\
\Xhline{3.5\arrayrulewidth}
\Xhline{3.5\arrayrulewidth}
\end{tabularx}
\label{tab1}
\end{table}
As expected, introducing a positive dependence (negative dependence) between the risks leads to a heavier (a lighter) tail for the aggregate risk $S_{2}$.
\end{example}
\begin{example}
In this example, two non-exchangeable copulas are considered
\begin{description}
 \item[(i)] A piece-wise copula based on two different Gaussian copulas 
 \begin{equation*}
\alpha(u_{1},u_{2})=\left \{
\begin{array}{ccc}
\tau C_{\phi}\left( \frac{u_{1}}{\tau },u_{2}; r_{1}\right) & if &
u_{1}\leqslant \tau \text{ } \\
\tau u_{2}+\left( 1-\tau \right) C_{\phi}\left( \frac{u-\tau }{1-\tau },u_{2}; r_{2}\right) & & Otherwise,%
\end{array}%
\right.
\end{equation*}%
where $C_{\phi}(.,.;r)$ is the Gaussian copula with parameter $r$. In our numerical computation, it is assumed that $\tau=0.5$, $r_{1}=-0.95$, and $r_{2}=0.95$.
 \item[(ii)] Following \citet{LIEBSCHER2008}, we consider a non-exchangeable copula based on transformations of two Clayton copulas (Cf. Equation (4) in (\citet{LIEBSCHER2008})
\begin{equation*}
 \alpha\left(u_{1}, u_{2}\right)=\left(1+\sum_{i=1}^{2}\left(u_{i}^{-\gamma \theta_{i}}-1\right)\right)^{-1 / \gamma}\left(1+\sum_{i=1}^{2}\left(u_{i}^{-\delta\left(1-\theta_{i}\right)}-1\right)\right)^{-1 / \delta},
\end{equation*}
with $\gamma=6$, $\delta=2$, $\theta_{1}=0.525$, and $\theta_{2}=0.3$. 
\end{description}
For these two copulas, we compute the $VaR_{0.95}(S_{2})$, $TVaR_{0.95}(S_{2})$, $TVaR_{0.95}(X_{1},S_{2})$, and $TVaR_{0.95}(X_{2},S_{2})$. The obtained values are given in Table \ref{tab2}. 
 \begin{table}[H]
\centering
\caption{Impact of $m$ on the VaR, TvaR in the case of non-exchangeable copulas}
\begin{tabularx}
{\textwidth}{bsssssss}
\Xhline{3.5\arrayrulewidth}
\Xhline{3.5\arrayrulewidth}
 Case (i) & $m=1$ & $m=5$ & $m=10$ &$m=20$ &$m=30$ &$m=40$ &$m=50$ \\
\Xhline{3.5\arrayrulewidth}
 $VaR_{0.95}(S_{2})$& 139.12 & 139.86 & 141.87 & 142.99& 143.31 & 143.43 & 143.49\\
 $TVaR_{0.95}(S_{2})$ &205.30 & 209.14 & 215.04 & 219.35 & 221.09 & 222.04 & 222.64 \\
 $TVaR_{0.95}(X_{1},S_{2})$ & 102.65 & 105.48 & 109.01 & 111.47 & 112.43 & 112.94 & 113.26\\
 $TVaR_{0.95}(X_{2},S_{2})$ &102.65 & 103.66 & 106.03 & 107.88 & 108.66 & 109.10& 109.38\\
\Xhline{3.5\arrayrulewidth}
\Xhline{3.5\arrayrulewidth}
Case (ii) & $m=1$ & $m=5$ & $m=10$ &$m=20$ &$m=30$ &$m=40$ &$m=50$ \\
\Xhline{3.5\arrayrulewidth}
$VaR_{0.95}(S_{2})$& 139.12 & 148.88 & 152.44 & 154.52 & 155.20 & 155.53 & 155.71\\
 $TVaR_{0.95}(S_{2})$ &205.30 & 222.08 & 229.17 & 234.16 & 236.19 & 237.30 & 238.01\\
 $TVaR_{0.95}(X_{1},S_{2})$ &102.65 & 110.99 & 114.51& 116.99 & 118.00& 118.56 & 118.91\\
 $TVaR_{0.95}(X_{2},S_{2})$ & 102.65 & 111.09 & 114.66& 117.16 & 118.18& 118.74& 119.10\\
\Xhline{3.5\arrayrulewidth}
\Xhline{3.5\arrayrulewidth}
\end{tabularx}
\label{tab2}
\end{table}
From these results, one can see that introducing the second layer of dependence (i.e., Bernstein copula) impacts the obtained values for the risk measures and the capital allocations. The fact that the dependence is non-exchangeable does not translate to a significant difference between the capital allocations $TVaR_{0.95}(X_{1},S_{2})$ and $TVaR_{0.95}(X_{2},S_{2})$ and this is due to the fact that $X_{1}$ and $X_{2}$ are id. We obtained similar results under different choices of $\alpha$ and marginal distributions. This pushes us to assume that the dependence structure only affects the level of the risk measures while the capital allocation (as a percentage) is mainly depending on the marginal risks.
\end{example}
\bibliography{revised-ref}

\end{document}